\documentclass[11pt]{article}

\usepackage{amsmath,amssymb,amsthm,epsfig,color,bbm,hyperref,verbatim,vmargin}
\hypersetup{
	colorlinks=true,
	linkcolor=blue,
	citecolor=blue,
	urlcolor=blue
}
\newtheorem{theorem}{Theorem}

\newtheorem{remark}[theorem]{Remark}
\newtheorem{corollary}[theorem]{Corollary}
\newtheorem{lemma}[theorem]{Lemma}
\newtheorem{proposition}[theorem]{Proposition}

\numberwithin{equation}{section}
\numberwithin{theorem}{section}

\title{Moments at the hard edge and Rayleigh functions}

\author{
  Anna Maltsev\thanks{School of Mathematical Sciences, Queen Mary Univesity of London, Mile End Road, London, E1 4NS, UK. Email: \texttt{a.maltsev@qmul.ac.uk}} \quad \text{and} \quad
  Nick Simm\thanks{Department of Mathematics, University of Sussex, Falmer, Brighton, BN1 9RH UK. Email: \texttt{n.j.simm@sussex.ac.uk}}
}

\date{}
\date{}
\begin{document}
\maketitle
\begin{abstract}
Motivated by the analogy between spectral moments of random matrices and associated zeta functions, we study inverse power trace moments of the Laguerre ensemble of dimension $N$ and inverse temperature parameter $\beta>0$. We consider a large $N$ regime determined by the low-lying eigenvalues of the ensemble known as the hard edge. In the classical cases $\beta \in \{1,2,4\}$, we obtain explicit results for the inverse moments and extend these to formulae for the corresponding Mellin transforms. In the case of general $\beta>0$, by a result of Fyodorov and Le Doussal \cite{FD16}, we obtain a different formula for the moments given as a sum over partitions. We use this to consider a low temperature limit where $\beta \to \infty$ as $N \to \infty$. In this limit, we show that the moments are given in terms of the Bessel zeta function. 
\end{abstract}

\section{Introduction and main results}
The topic of spectral moments has a long history in random matrix theory, dating back to Wigner's work on the semi-circle law in 1955. Since then, several connections to other areas have been found, such as the use of spectral moments for computing the Euler characteristic of moduli spaces \cite{HZ86}, as well as the connection to genus expansion and orthogonal polynomials \cite{HT03, L09}. There is continued interest in this topic from various perspectives. For example, recent works treat non-Hermitian and elliptic ensembles, leading to explicit formulae for mixed moments and Harer--Zagier type recursions for real-eigenvalue moments \cite{ABO25,B24}. Another direction concerns deformed ensembles: $q$-deformed Gaussian and Laguerre models exhibit large-$N$ genus expansions and a $q$-deformation of the Marchenko--Pastur law 
\cite{BJM26,B26}. Spectral moments also arise in combinatorics and applications through ribbon graph descriptions of correction terms \cite{RGM26} and through Bures--Hall calculations related to entanglement entropy \cite{W26}.

This paper is inspired by the developments of \cite{CMOS19} relating spectral moments of random matrix ensembles to associated zeta functions. To introduce the problem we consider here, recall that the \textit{Laguerre ensemble} is the joint probability density function of eigenvalues proportional to
\begin{equation}
\prod_{j=1}^{N}\lambda_{j}^{\alpha}e^{-\lambda_{j}}\prod_{1 \leq i < j \leq N}|\lambda_{j}-\lambda_{i}|^{\beta}, \label{laguerrebeta}
\end{equation}
supported on $\lambda_{j} \in [0,\infty)$ for each $j=1,\ldots,N$. Here, $\beta>0$ is the inverse temperature parameter. In the special cases $\beta \in \{1,2,4\}$, this density arises from the eigenvalues of $XX^{\dagger}$ where $X$ is an $N \times m$ matrix of independent Gaussian random variables over the real ($\beta=1$), complex ($\beta=2$) or quaternion ($\beta=4$) numbers and $\alpha = \frac{\beta}{2}(m-N+1)-1$. Such matrices are known as Wishart matrices after their introduction in the multivariate statistics literature by Wishart in 1928. More recently, it was discovered that \eqref{laguerrebeta} can be realized in terms of the eigenvalues of certain tri-diagonal matrix models for any $\beta>0$ \cite{DE02}. In this setting, \eqref{laguerrebeta} can be interpreted as a statistical mechanics model of a Coulomb gas with inverse temperature $\beta$ and restricted to the positive half-line.

The subject of the present study is the following class of inverse moments
\begin{equation}
\mathcal{M}^{(\beta)}_{N}(-k,\alpha) = \mathbb{E}\left(\sum_{j=1}^{N}\frac{1}{\lambda_{j}^{k}}\right), \qquad k>0,\label{momsintro}
\end{equation}
where the expectation is taken with respect to \eqref{laguerrebeta}. One of our motivations for considering such moments pertains to the interpretation of \eqref{momsintro} as a spectral zeta function. This perspective was considered in \cite{CMOS19} where for $\beta=2$ it was shown that \eqref{momsintro} has the following remarkable properties.
\begin{theorem}[Theorem 4.1 in \cite{CMOS19}]
\label{thm:cmos}
Let $\beta=2$ and define $\xi_{N}(s) = \frac{\mathcal{M}^{(2)}_{N}(-s,\alpha)}{\Gamma(1+\alpha-s)}$. Then $\xi_{N}(s)$ can be analytically continued to an entire function of $s \in \mathbb{C}$ satisfying the functional equation
\begin{equation}
\label{reflection}
\xi_{N}(s) = \xi_{N}(1-s), \qquad s \in \mathbb{C}.
\end{equation}
Furthermore, all zeros of $\xi_{N}(s)$ lie on the vertical line $\mathrm{Re}(s) = \frac{1}{2}$.
\end{theorem}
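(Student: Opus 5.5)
We write the moment through the one-point density of the LUE, $\rho_N(x) = x^{\alpha}e^{-x}\sum_{j=0}^{N-1}\frac{j!}{\Gamma(j+\alpha+1)}L_j^{(\alpha)}(x)^2$. Then
\[
\mathcal{M}^{(2)}_N(-s,\alpha)=\int_0^\infty x^{-s}\rho_N(x)\,dx .
\]
This is a Mellin transform, convergent for $\mathrm{Re}(s)<1+\alpha$. The plan is to compute it as a finite explicit sum. Expanding each $L_j^{(\alpha)}(x)^2$ in monomials and integrating term by term gives integrals $\int_0^\infty x^{\alpha+m-s}e^{-x}\,dx=\Gamma(1+\alpha+m-s)$. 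Factoring out $\Gamma(1+\alpha-s)$ leaves Pochhammer symbols $(1+\alpha-s)_m$, which are polynomials in $s$. Hence $\xi_N(s)$ equals a polynomial in $s$ of degree at most $2N-2$ on the half-plane of convergence. That polynomial gives the entire continuation.

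To get a structured polynomial, we use the Christoffel--Darboux formula rather than brute-force expansion. An equivalent route is the identity $\rho_N(x)=x^{\alpha}e^{-x}\frac{N!}{\Gamma(N+\alpha)}\bigl(L_{N-1}^{(\alpha)}L_{N-1}^{(\alpha+1)}-L_{N}^{(\alpha)}L_{N-2}^{(\alpha+1)}\bigr)$, or similar forms. Alternatively, one can use a known formula for the Mellin transform of a product of two Laguerre polynomials as a terminating ${}_3F_2$. Applied to each term, this gives
\[
\int_0^\infty x^{\alpha-s}e^{-x}L_j^{(\alpha)}(x)^2dx=\frac{\Gamma(1+\alpha-s)\Gamma(j+s)}{j!\,\Gamma(s)}\cdot(\text{terminating }{}_3F_2\text{ at }1),
\]
or, more usefully, a form in which $\xi_N(s)$ becomes a terminating ${}_3F_2$ (or a finite sum of them) in $s$. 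The target is to exhibit $\xi_N(s)$ as a polynomial invariant under $s\mapsto 1-s$. The natural candidate is a continuous Hahn polynomial $p_{N-1}$, since those are symmetric under reflection and orthogonal on the line $\mathrm{Re}(s)=\tfrac12$. Concretely, I would aim to show
\[
\xi_N(s)=c_{N,\alpha}\,{}_3F_2\!\left(\begin{matrix}-(N-1),\ s,\ 1-s\\ 2,\ \alpha+1\end{matrix};1\right).
\]
Symmetry in $s\leftrightarrow 1-s$ is then manifest, which gives the functional equation.

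For the zeros, we use the fact that continuous Hahn polynomials $p_n(t;a,b,\bar a,\bar b)$ are orthogonal with respect to the positive weight $|\Gamma(a+it)\Gamma(b+it)|^2$ on $\mathbb{R}$ when $\mathrm{Re}\,a,\mathrm{Re}\,b>0$. Setting $s=\tfrac12+it$ turns the ${}_3F_2$ above into $p_{N-1}(t)$ with parameters such as $a=\tfrac12$, $b=\alpha+\tfrac12$; the exact matching must be checked. Orthogonal polynomials for a positive measure on $\mathbb{R}$ have all zeros real and simple. Therefore every zero of $\xi_N$ has $\mathrm{Re}(s)=\tfrac12$. If the exact Hahn identification fails, a fallback is to show directly that $\xi_N(\tfrac12+it)$ satisfies a three-term recurrence in $N$ with positive coefficients (Favard), coming from the Laguerre recurrence.

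The main obstacle is the second step: reducing the double sum coming from $\sum_j L_j^2$ to a single terminating ${}_3F_2$ in hypergeometric form with $s$ and $1-s$ as numerator parameters. I would first try summing over $j$ via the Christoffel--Darboux representation. Then I would compute the Mellin transform of $x^{\alpha}e^{-x}L_{N-1}^{(\alpha)}L_{N-1}^{(\alpha+1)}$-type products using the linearization (product) formula for Laguerre polynomials, and check small $N$ (e.g. $N=1$: $\xi_1=1/\Gamma(1+\alpha)$, constant) to fix normalizations.
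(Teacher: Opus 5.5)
This statement is only quoted in the paper (from \cite{CMOS19}), so there is no in-paper proof to compare against. Judged on its own, your proposal has two gaps.

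\textbf{The key identity is not derived.} Your first step is correct. Writing $\rho_N(x)=x^{\alpha}e^{-x}P(x)$ with $\deg P\le 2N-2$ gives $\xi_N(s)=\sum_m p_m(1+\alpha-s)_m$, a polynomial, hence entire. Your target identity is also correct, with $c_{N,\alpha}=N/\Gamma(\alpha+1)$: it reproduces $\mathcal{M}^{(2)}_N(-1,\alpha)=N/\alpha$ and $\mathcal{M}^{(2)}_N(-2,\alpha)=N(N+\alpha)/(\alpha(\alpha-1)(\alpha+1))$ from Section~\ref{sec:fdresults}. But you never prove it; you name it yourself as the main obstacle. Both the functional equation and the zero statement rest entirely on it, so as it stands the functional equation is unproved. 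One way to organise the derivation: both sides are polynomials in $s$ of degree at most $2N-2$, so it suffices to compare them at $s=-k$, $k=0,\dots,2N-2$. There the identity becomes the positive-moment formula $\mathbb{E}\sum_j\lambda_j^{k}=N(\alpha+1)_k\,{}_3F_2(1-N,-k,k+1;2,\alpha+1;1)$, which still has to be established from the Laguerre density.

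\textbf{The zero argument uses the wrong family.} With $s=\tfrac12+it$ one has $(s)_j(1-s)_j=\prod_{m=0}^{j-1}\bigl((m+\tfrac12)^2+t^2\bigr)$. So $\xi_N(\tfrac12+it)$ is an even polynomial of degree $2N-2$ in $t$. It cannot be a continuous Hahn polynomial $p_{N-1}(t)$, and for the same reason your Favard fallback cannot be run in the variable $t$ (only even degrees occur). The right variable is $x=t^2=s(1-s)-\tfrac14$. In it, ${}_3F_2(1-N,\tfrac12+it,\tfrac12-it;2,\alpha+1;1)$ equals $S_{N-1}(t^2;\tfrac12,\tfrac32,\alpha+\tfrac12)/\bigl((2)_{N-1}(\alpha+1)_{N-1}\bigr)$, a continuous dual Hahn polynomial. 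For $\alpha>-\tfrac12$ these are orthogonal on $t\in(0,\infty)$ with weight $\bigl|\Gamma(\tfrac12+it)\Gamma(\tfrac32+it)\Gamma(\alpha+\tfrac12+it)/\Gamma(2it)\bigr|^2$. Hence the zeros in $t^2$ are real, simple and \emph{positive}, and that is what forces $\mathrm{Re}(s)=\tfrac12$.

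Positivity of the support is the essential point, not mere reality of the zeros in $t^2$, which is all a Favard argument would give. It is also exactly where a hypothesis such as $\alpha>-\tfrac12$ must enter. For $\alpha<-\tfrac12$ the orthogonality measure acquires a point mass at negative $t^2$. Already for $N=2$, a direct computation from the Laguerre density gives $\xi_2(s)=\frac{2}{\Gamma(\alpha+1)}\bigl(1-\frac{s(1-s)}{2(\alpha+1)}\bigr)$. This vanishes at $t^2=2\alpha+\tfrac74$, i.e.\ at real $s\neq\tfrac12$ when $-1<\alpha<-\tfrac78$.
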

In this paper we study the large $N$ asymptotics of \eqref{momsintro} with general $\beta>0$ and we keep the parameter $\alpha$ fixed as $N \to \infty$. This is known as a hard edge scaling regime of the eigenvalue point process \cite{Forr}. In this setting, the moments \eqref{momsintro} have a non-trivial dependence on the parameters $\alpha$ and $\beta$. Our aim here is to give the first systematic study of \eqref{momsintro} in the hard edge scaling regime: in particular, to identify the large $N$ asymptotic scaling and to determine the dependence on $\alpha$ and $\beta$ (Theorems \ref{thm:bet2}, \ref{thmbet1bet4}, Corollary \ref{thm:momsbeta}). Our second main contribution is to identify an appropriate large $N$ and large $\beta$ limit such that \eqref{momsintro} is given in terms of the Bessel zeta function (Theorem \ref{thm:lowtemp}). 

The Bessel zeta function is the zeta function associated with the positive zeros $j_{\nu,n}$ of the Bessel function $J_\nu$,
\begin{equation}
\zeta_\nu(s)=\sum_{n=1}^{\infty}\frac{1}{j_{\nu,n}^{s}}, \qquad \mathrm{Re}(s)>1, \label{besselzeta}
\end{equation}
with parameter $\nu>-1$. The even integer values $\zeta_{\nu}(2k)$ are known as \textit{Rayleigh functions}, we discuss these further in Section \ref{se:betaresults}. The special case $\nu=\tfrac12$ reduces to $\pi^{-s}\zeta(s)$ for the Riemann $\zeta$-function. Furthermore, for the one-dimensional Sturm–Liouville operator
\[
S_\nu=-\frac{d^2}{dx^2}+\frac{4\nu^2-1}{4x^2}
\]
with Dirichlet boundary condition at $x=l$, the eigenvalues are $j_{\nu,n}^2/l^2$, so its spectral zeta function is exactly $l^{2s}\zeta_\nu(2s)$ \cite{spreafico2004non}. The same mechanism underlies Laplacians on discs, cones, and circular sectors, where the radial eigenvalues are again determined by squares of Bessel zeros \cite{spreafico2005zeta}. Thus the Bessel zeta functions link spectral geometry and mathematical physics and they arise naturally in problems with cylindrical or spherical symmetry (see \cite{naber2022bessel,AB96} and references therein). In the present hard-edge setting they are the natural limiting objects, because the appropriately scaled $\beta$-Laguerre eigenvalues converge to Bessel zeros in a large $\beta$ limit.

In the cases $\beta \in \{1,2,4\}$, several results in the literature give exact formulae for \eqref{momsintro} of varying degrees of complexity \cite{LM04,M12,CSN14,MS11}. From these, in principle, it is possible to perform a large $N$ analysis, but this has only been done when $\alpha$ is scaled in proportion to $N$. In particular, \eqref{momsintro} was studied in the context of Wigner delay times arising in mesoscopic physics \cite{MS12,MS13,CMSV16,N22} and in the context of $1/N$ expansions \cite{CDO21,GGR20}. See also \cite{CCO20} for an extension to discrete orthogonal polynomial ensembles and \cite{ABGS21} in the Cauchy ensemble. 

A general way to gain insight into quantities of the form \eqref{momsintro} is to consider the limiting density of eigenvalues distributed according to \eqref{laguerrebeta}, which is the well-known Marchenko-Pastur law. To state this, let $m := m_N \to \infty$ such that $\frac{N}{m} \to c \in [0,1]$ and consider the scaled eigenvalues $x_{j} = 2\lambda_{j}/(\beta m)$. Then for every bounded and continuous function $f$ and any fixed $\beta>0$, we have (see \textit{e.g.}\ \cite[Eq. (4)]{DE06})
\begin{equation}
\lim_{N \to \infty}\,\frac{1}{N}\mathbb{E}\left(\sum_{j=1}^{N}f(x_j)\right) = \int_{c_{-}}^{c_{+}}f(x)\rho_{\mathrm{MP}}(x)\,dx, \label{mplaw}
\end{equation}
where $\rho_{\mathrm{MP}}$ is the Marchenko-Pastur density,
\begin{equation}
\rho_{\mathrm{MP}}(x) = \frac{\sqrt{(x-c_{-})(c_{+}-x)}}{2\pi c x}
\end{equation}
supported on $(c_{-},c_{+})$, where $c_{\pm} = (1\pm \sqrt{c})^{2}$. Note that when $c=1$, the lower end point of the support is $c_{-}=0$, which is the case if $\alpha$ in \eqref{laguerrebeta} is fixed and in the standard terminology, this is referred to as a hard edge \cite{Forr}. In particular, the right-hand side of \eqref{mplaw} with $f(x) = x^{-k}$ diverges for any $k \geq 1$ which shows that \eqref{mplaw} cannot be (directly) used to calculate the inverse moments in the presence of a hard edge.

\subsection{Main results for {\texorpdfstring{$\beta \in \{1,2,4\}$}{beta = 1,2,4}}}
Throughout the paper, the parameter $\alpha$ is kept fixed as $N \to \infty$, so that we are working in the hard edge regime of \eqref{mplaw}. We will show that for any $s \in \mathbb{C}$ in the strip $\mathcal{S} = \{\frac{1}{2} < \mathrm{Re}(s) < \alpha+1\}$ and $\beta \in \{1,2,4\}$, the following limit exists
\begin{equation}
	\mathcal{M}^{(\beta)}(-s,\alpha) := \lim_{N \to \infty}N^{-s}\mathcal{M}^{(\beta)}_{N}(-s,\alpha). \label{limit-exists}
\end{equation}
As we are working in the analyticity domain $\mathcal{S}$, our results are only meaningful when $\alpha > -\frac{1}{2}$. Our objective is to characterize the limiting quantity $\mathcal{M}^{(\beta)}(-s,\alpha)$.
\begin{theorem}[Hard edge Mellin transform for $\beta=2$]
\label{thm:bet2}
The limit \eqref{limit-exists} exists for $\beta=2$, $\alpha > - \frac{1}{2}$ and for any $s \in \mathcal{S}$. We have
\begin{equation}
\mathcal{M}^{(2)}(-s,\alpha) = \frac{4^{s-1}\Gamma(\alpha+1-s)\Gamma\left(s-\frac{1}{2}\right)}{\sqrt{\pi}\Gamma(s+1)\Gamma(s+\alpha)}. \label{bet2form}
\end{equation}
\end{theorem}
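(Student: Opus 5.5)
We will use the determinantal structure at $\beta=2$. The one-point density of \eqref{laguerrebeta} is $\rho_N(\lambda)=\lambda^{\alpha}e^{-\lambda}\sum_{j=0}^{N-1}\frac{j!}{\Gamma(j+\alpha+1)}L_j^{(\alpha)}(\lambda)^2$. Hence
\[
\mathcal{M}^{(2)}_N(-s,\alpha)=\int_0^\infty \lambda^{-s}\rho_N(\lambda)\,d\lambda ,
\]
which converges precisely for $\mathrm{Re}(s)<\alpha+1$. Rescaling $\lambda=x/(4N)$ gives $N^{-s}\mathcal{M}^{(2)}_N(-s,\alpha)=4^{s}\int_0^\infty x^{-s}\,\frac{1}{4N}\rho_N\!\left(\frac{x}{4N}\right)dx$.

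The classical hard-edge limit is
\[
\frac{1}{4N}\rho_N\!\left(\frac{x}{4N}\right)\to \rho_{\mathrm{Bes}}(x)=\tfrac14\bigl(J_\alpha(\sqrt x)^2-J_{\alpha+1}(\sqrt x)J_{\alpha-1}(\sqrt x)\bigr),
\]
locally uniformly on $(0,\infty)$. The candidate limit is therefore $4^{s}\int_0^\infty x^{-s}\rho_{\mathrm{Bes}}(x)\,dx$, and the proof splits into computing this Mellin transform and justifying the interchange of limit and integral.

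\textbf{Step 1: the Mellin transform.} Put $x=t^2$ to get $2\cdot 4^{s}\cdot\frac14\int_0^\infty t^{1-2s}\bigl(J_\alpha^2-J_{\alpha+1}J_{\alpha-1}\bigr)(t)\,dt$. Apply the Weber--Schafheitlin type formula
\[
\int_0^\infty t^{-\lambda}J_\mu(t)J_\nu(t)\,dt=\frac{\Gamma(\lambda)\Gamma\bigl(\frac{\mu+\nu-\lambda+1}{2}\bigr)}{2^{\lambda}\Gamma\bigl(\frac{-\mu+\nu+\lambda+1}{2}\bigr)\Gamma\bigl(\frac{\mu+\nu+\lambda+1}{2}\bigr)\Gamma\bigl(\frac{\mu-\nu+\lambda+1}{2}\bigr)}
\]
with $\lambda=2s-1$. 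This requires $\mathrm{Re}(s)>\frac12$ for convergence at infinity, which is exactly where each piece converges. Taking $(\mu,\nu)=(\alpha,\alpha)$ and $(\alpha+1,\alpha-1)$, both terms share $\Gamma(2s-1)\Gamma(\alpha+1-s)/\Gamma(\alpha+s)$. Subtracting, $\frac{1}{\Gamma(s)^2}-\frac{1}{\Gamma(s-1)\Gamma(s+1)}=\frac{1}{\Gamma(s)\Gamma(s+1)}$. The Legendre duplication formula $\Gamma(2s-1)=2^{2s-2}\Gamma(s-\frac12)\Gamma(s)/\sqrt\pi$ then collapses everything. After tracking powers of $2$, this should give exactly \eqref{bet2form}.

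Alternatively, one can avoid the kernel formula by using the derivative identity $\rho_{\mathrm{Bes}}(x)=\frac{d}{dx}\bigl(\ldots\bigr)$ or the integral representation $K(x,x)=\frac14\int_0^1 J_\alpha(\sqrt{ux})^2\,du$. This reduces the computation to the single Mellin transform $\int x^{-s}J_\alpha(\sqrt x)^2\,dx$ times $\int_0^1u^{s-1}\,du=1/s$, which explains the factor $\Gamma(s)/\Gamma(s+1)$.

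\textbf{Step 2 (main obstacle): dominated convergence on $\mathcal{S}$.} We need an $N$-uniform integrable majorant for $x^{-\mathrm{Re}(s)}\frac{1}{4N}\rho_N(x/4N)$. Near $0$, the bound $\frac{1}{4N}\rho_N(x/4N)\le C x^{\alpha}$ uniformly in $N$ for $x\le 1$ follows from the explicit small-argument behaviour of Laguerre polynomials and the sum $\sum_j \binom{j+\alpha}{j}^2\frac{j!}{\Gamma(j+\alpha+1)}\asymp N^{\alpha+1}$; this is integrable against $x^{-\mathrm{Re}(s)}$ since $\mathrm{Re}(s)<\alpha+1$.

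For large $x$ we need a bound like $C x^{-1/2}$, matching the square-root bulk behaviour $\rho_{\mathrm{Bes}}(x)\sim\frac{1}{2\pi\sqrt x}$. This must hold uniformly in $N$ up to the soft edge $x\approx 16N^2$, which is integrable against $x^{-\mathrm{Re}(s)}$ precisely when $\mathrm{Re}(s)>\frac12$. I would try to obtain it from the Christoffel--Darboux form together with the uniform Laguerre bounds of Erd\'elyi--Magnus--Nevai type, $e^{-\lambda/2}\lambda^{\alpha/2+1/4}|L_j^{(\alpha)}(\lambda)|\sqrt{j!/\Gamma(j+\alpha+1)}\lesssim j^{-1/4}$ on the oscillatory region. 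Summing gives $\frac{1}{4N}\rho_N(x/4N)\lesssim x^{-1/2}$ on $1\le x\le cN^2$, and exponential decay beyond the soft edge controls the rest.

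As a fallback, one can bypass these bounds. An exact finite-$N$ formula for $\mathcal{M}^{(2)}_N(-s,\alpha)$ as a finite sum or hypergeometric ratio of Gamma functions follows from the Mellin transforms of $\lambda^{\alpha}e^{-\lambda}L_j^{(\alpha)}(\lambda)^2$, as in \cite{CMOS19}. Its asymptotics can then be extracted directly via Stirling's formula, for instance from a representation $\sum_{j<N}$ of Gamma ratios behaving like $j^{s-1}\cdot\frac{\Gamma(s-1/2)}{\ldots}$, whose sum gives the $N^{s}$ growth with the stated constant.
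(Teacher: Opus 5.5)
Your plan is essentially the paper's first proof: the hard-edge limit of $\frac{1}{4N}\rho_N(u/4N)$, a uniform tail bound $\rho_N(\lambda)=O(\sqrt{N/\lambda})$, and then the Weber--Schafheitlin formula with the duplication formula, where your Gamma-function bookkeeping is correct. The paper obtains the tail bound differently, from the rewritten Christoffel--Darboux form $x^{\alpha}e^{-x}c_{N,\alpha}\bigl(L^{(\alpha)}_N L^{(\alpha)}_{N-1}-L^{(\alpha+1)}_{N-1}L^{(\alpha-1)}_N\bigr)$, which needs asymptotics only for degrees $N,N-1$ deep in their oscillatory region; summing bounds on every $\mathcal{L}_j^2$ as you propose also requires control of the degrees whose turning point $4j$ lies near or below $\lambda$, and for $\lambda\ge\delta N$ no soft-edge analysis is needed, since $\int_{\delta N}^{\infty}|\lambda^{-s}|\rho_N(\lambda)\,d\lambda\le\delta^{-\mathrm{Re}(s)}N^{1-\mathrm{Re}(s)}=o(N^{\mathrm{Re}(s)})$ for $\mathrm{Re}(s)>\frac12$.
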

\begin{proof}
See Section \ref{se:proofsbet124}.
\end{proof}
We give two proofs of Theorem \ref{thm:bet2}, both of which are based on the fact that for $\beta=2$, \eqref{laguerrebeta} can be characterized in terms of a determinantal point process with an explicit kernel constructed in terms of Laguerre polynomials. The exact formulae that result were used in \cite{CMOS19} to establish Theorem \ref{thm:cmos}. Combining \eqref{reflection} and the Marchenko-Pastur result \eqref{mplaw} allows for a quick proof of Theorem \ref{thm:bet2}. The other proof we give is based on a more explicit computation in terms of the eigenvalue density corresponding to \eqref{laguerrebeta}. The advantage of the latter approach is that it extends to $\beta \neq 2$ and highlights the significance of the hard edge scaling in terms of Bessel functions \cite{F93,TW94}.

The cases $\beta=1$ and $\beta=4$ are known to be more challenging than $\beta=2$. In particular, the reflection property in \eqref{reflection} does not hold. In these cases \eqref{laguerrebeta} is governed by a Pfaffian point process and one often needs to work with skew-orthogonal polynomials. Several authors have worked to simplify this approach and reduce the problem to the $\beta=2$ case up to certain correction terms \cite{TW96,AFNvM00,W99}. Hence, the difficulty in the $\beta \in \{1,4\}$ cases is to effectively deal with such additional terms. 

In order to state our results, recall the definition of the hypergeometric series,
\begin{equation}
_pF_q(a_1,\ldots,a_p;b_1,\ldots,b_q;z) := \sum_{n=0}^{\infty}\frac{(a_1)_{n}\ldots(a_p)_{n}}{(b_1)_{n}\ldots(b_q)_{n}}\frac{z^{n}}{n!}.
\end{equation}
\begin{theorem}[Hard edge Mellin transform for $\beta=1$ and $\beta=4$]
	\label{thmbet1bet4}
The limit \eqref{limit-exists} exists for $\beta \in \{1,4\}$, $\alpha > -\frac{1}{2}$ and for any $s \in \mathcal{S}$. We have
\begin{equation}
\begin{split}
&\mathcal{M}^{(4)}(-s,\alpha) = 2^{s-1}\mathcal{M}^{(2)}(-s,\alpha)\\
&-\frac{2^{s-1}\Gamma(\alpha+1-s)}{\Gamma(\alpha+3)\Gamma(s-1)}{}_3F_2\left(\frac{\alpha}{2}+1,-s+\alpha+1,-s+2;\frac{\alpha}{2}+2,\alpha+2; 1\right) \label{hesymp}
\end{split}
\end{equation}
and
\begin{equation}
\begin{split}
&\mathcal{M}^{(1)}(-s,\alpha)  = 2^{s}\mathcal{M}^{(2)}(-s,2\alpha)+2^{s-1}\frac{\Gamma(\alpha+1-s)}{\Gamma(\alpha+1+s)}\\
&-2^{s}\frac{\Gamma(2\alpha+1-s)}{\Gamma(2\alpha+1)\Gamma(s+1)}{}_3F_2\left(\alpha,2\alpha+1-s,-s;\alpha+1,2\alpha; 1\right). \label{heortho}
\end{split}
\end{equation}
The formula \eqref{hesymp} can also be expressed as
\begin{multline}
\mathcal{M}^{(4)}(-s,\alpha)
=\frac{2^{s-1}(\alpha+3-s)\Gamma(\alpha+1-s)}
{\Gamma(\alpha+3)\Gamma(s)}\\
\times {}_4F_3\left(
\frac{\alpha+5-s}{2},\frac{\alpha}{2}+1,\alpha+1-s,2-s;
\frac{\alpha+3-s}{2},\frac{\alpha}{2}+2,\alpha+2;1
\right).
\label{he4single}
\end{multline}
\end{theorem}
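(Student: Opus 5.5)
Our plan is to work with the one-point densities of the $\beta\in\{1,4\}$ ensembles. For these we use the known decompositions of the Pfaffian correlation kernels as the $\beta=2$ Laguerre kernel plus a rank-one correction \cite{AFNvM00,W99}. Write $\rho^{(\beta)}_N(x)$ for the eigenvalue density, so that $\mathcal{M}^{(\beta)}_N(-s,\alpha)=\int_0^\infty x^{-s}\rho^{(\beta)}_N(x)\,dx$.

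\textbf{Step 1: hard edge scaling of the densities.} For $\beta=4$ (with weight $x^\alpha e^{-x}$) the density is, up to the rescaling $x\mapsto 2x$ responsible for the factor $2^{s-1}$, one half of a $\beta=2$ density plus a correction. The correction is of the form $x^{\alpha/2}J_{\alpha+1}(\cdot)\int_0^{\cdot} t^{\alpha/2}J_{\alpha+1}(t)\,dt$ after scaling. For $\beta=1$ the $\beta=2$ part carries parameter $2\alpha$. The correction involves $J_{2\alpha}$ times an integral of Bessel functions, together with the extra term coming from the odd/even parity part. We set $x=y/(4N)$. The Mellin integral then becomes $N^{s}\int_0^\infty y^{-s}\rho_N(y/4N)\,dy/(4N)$, and we pass to the limit using the Bessel hard edge asymptotics of Laguerre polynomials, $e^{-x/2}L_N^{(\alpha)}(x/4N)\approx N^{\alpha}(\cdot)^{-\alpha/2}J_\alpha(\sqrt{y})$.

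\textbf{Step 2: justify the exchange of limit and integral.} We need a dominating function on the strip $\mathcal S$. Near $y=0$ the densities behave like $y^{\alpha}$ (for $\beta=2$) and like $y^{\alpha}$ or $y^{\alpha/2}$ (for the other cases), which gives integrability for $\mathrm{Re}(s)<\alpha+1$. At large $y$ we use the Marchenko--Pastur bulk bound $\rho_N(x)\lesssim N^{1/2}x^{-1/2}$, obtained from uniform Plancherel--Rotach-type estimates. This gives convergence for $\mathrm{Re}(s)>\tfrac12$. Combined with Theorem \ref{thm:bet2} for the $\beta=2$ piece, this yields $2^{s-1}\mathcal{M}^{(2)}(-s,\alpha)$ and $2^{s}\mathcal{M}^{(2)}(-s,2\alpha)$, respectively.

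\textbf{Step 3: Mellin transforms of the correction terms.} These are the main obstacle. For $\beta=4$ we must compute
\[
\int_0^\infty y^{-s}\, y^{\cdot}J_{\alpha+1}(\sqrt y)\int_0^{\sqrt y} t^{\cdot}J_{\alpha+1}(t)\,dt\,dy .
\]
We first expand the inner integral as a power series, $\int_0^u t^{a}J_{\nu}(t)\,dt=\sum_k c_k u^{a+\nu+2k+1}$; this produces the $\frac{\alpha}{2}+1$ versus $\frac{\alpha}{2}+2$ Pochhammer ratio. We then integrate term by term against $J_{\alpha+1}$ using the Weber--Schafheitlin/Mellin formula $\int_0^\infty u^{\mu-1}J_\nu(u)\,du=2^{\mu-1}\Gamma(\frac{\nu+\mu}{2})/\Gamma(\frac{\nu-\mu}{2}+1)$. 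This is valid first in a strip of $s$, formally by analytic continuation, and rigorously via a Mellin--Barnes representation. Summing the resulting series of gamma ratios yields the ${}_3F_2$ at unit argument. Its convergence (the parameter excess is positive) must be checked on $\mathcal S$, with analytic continuation extending the identity to all of $\mathcal S$. The $\beta=1$ case is analogous with $J_{2\alpha}$. The extra term $2^{s-1}\Gamma(\alpha+1-s)/\Gamma(\alpha+1+s)$ is the Mellin transform of the single Bessel-function term $\tfrac12 y^{-1/2}J_\alpha(\sqrt y)$ appearing in the $\beta=1$ density for $N$ of appropriate parity.

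\textbf{Step 4: the alternative form \eqref{he4single}.} Starting from \eqref{hesymp}, we write $\mathcal{M}^{(2)}$ via \eqref{bet2form} and combine it with the ${}_3F_2$ by contiguous relations. Alternatively, we redo Step 3 grouping both contributions into a single series, whose $n$-th term contains the factor $(\alpha+3-s+2n)$. This factor is produced exactly by the well-poised parameter pair $\frac{\alpha+5-s}{2},\frac{\alpha+3-s}{2}$. We then verify the identity by comparing the two series termwise after applying a Thomae-type transformation to the ${}_3F_2$.
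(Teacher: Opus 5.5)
\textbf{Verdict.} Your overall route is the paper's: the tail bound $\rho_N(x)=O((N/x)^{1/2})$, passage to the hard-edge densities, Mellin transforms of the correction terms, and a termwise recombination for \eqref{he4single}. Computing the Mellin transform of the product by expanding the inner integral and integrating term by term (justified via Mellin--Barnes) is a legitimate substitute for the tabulated transform the paper quotes. The gap is in the input to Step 3: the correction terms you write down are not the hard-edge ones, and with them the computation does not give \eqref{hesymp} or \eqref{heortho}.

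\textbf{The $\beta=4$ correction.} By \eqref{bet4he} the correction is $-\frac{1}{4\sqrt{2u}}J_{\alpha-1}(\sqrt{2u})\int_0^{\sqrt{2u}}J_{\alpha+1}(t)\,dt$. The powers $x^{\alpha/2-1}$ and $t^{\alpha/2}$ in \eqref{bet4exact} combine with the Bessel asymptotics of $L^{(\alpha-1)}_{2N}$ and $L^{(\alpha+1)}_{2N-1}$. What remains is only $u^{-1/2}$ in front and an unweighted integrand. Both features matter.
\begin{itemize}
\item The unweighted $J_{\alpha+1}$ produces $1/(\alpha+2+2k)$, hence $(\frac{\alpha}{2}+1)_k/(\frac{\alpha}{2}+2)_k$. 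Your $t^{\alpha/2}J_{\alpha+1}(t)$ would give $1/(\frac{3\alpha}{2}+2+2k)$, so the ratio you claim does not follow from your own integrand.
\item The outer order $\alpha-1$, two below the inner one, makes the $k$-th Mellin factor equal to $\Gamma(\alpha+1+k-s)/\Gamma(s-1-k)=(-1)^k\Gamma(\alpha+1-s)(\alpha+1-s)_k(2-s)_k/\Gamma(s-1)$. This is where the parameter $2-s$ and the $\Gamma(s-1)$ in \eqref{hesymp} come from. An outer $J_{\alpha+1}$ (same $u^{-1/2}$) would instead give $\Gamma(\alpha+2+k-s)/\Gamma(s-k)$.
\end{itemize}

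\textbf{The $\beta=1$ correction: a missing step and a wrong attribution.} By \eqref{bet1he} the correction is $\frac{J_{2\alpha+1}(\sqrt{2u})}{2\sqrt{2u}}\int_{\sqrt{2u}}^{\infty}J_{2\alpha-1}(t)\,dt$. The Bessel orders are $2\alpha\pm1$, not $2\alpha$. The integral is a \emph{tail} integral inherited from $\mathrm{sgn}(x-t)$ in \eqref{bet1exact}, so your power-series expansion of $\int_0^u$ does not apply to it directly. The missing step is to use $\int_0^\infty J_{2\alpha-1}(t)\,dt=1$ and write the tail as $1-\int_0^{\sqrt{2u}}J_{2\alpha-1}(t)\,dt$.
\begin{itemize}
\item The second piece is exactly what your Step 3 handles. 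It yields $-2^{s}\frac{\Gamma(2\alpha+1-s)}{\Gamma(2\alpha+1)\Gamma(s+1)}{}_3F_2(\alpha,2\alpha+1-s,-s;\alpha+1,2\alpha;1)$.
\item The constant $1$ produces the middle term of \eqref{heortho}, since $4^s\int_0^\infty u^{-s}\frac{J_{2\alpha+1}(\sqrt{2u})}{2\sqrt{2u}}\,du=2^{s-1}\frac{\Gamma(\alpha+1-s)}{\Gamma(\alpha+1+s)}$.
\end{itemize}
Your attribution of this middle term to a parity term $\frac12 y^{-1/2}J_\alpha(\sqrt y)$ fails. For even $N$, the case used, \eqref{bet1exact} contains no such term. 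In any case its Mellin transform carries the ratio $\Gamma(\frac{\alpha+1}{2}-s)/\Gamma(\frac{\alpha+1}{2}+s)$. That has poles at $s=\frac{\alpha+1}{2}+n$ rather than at $s=\alpha+1+n$.

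\textbf{Two minor points.}
\begin{itemize}
\item In Step 2, a behaviour $y^{\alpha/2}$ at the origin would only allow $\mathrm{Re}(s)<\frac{\alpha}{2}+1$. In fact the full hard-edge densities are $O(u^\alpha)$ there, which is what matches the strip.
\item In Step 4 no Thomae transformation is needed. Gauss summation writes $2^{s-1}\mathcal{M}^{(2)}(-s,\alpha)$ as $\frac{2^{s-1}\Gamma(\alpha+1-s)}{\Gamma(\alpha+3)\Gamma(s-1)}\frac{\alpha+2}{s-1}{}_2F_1(\alpha+1-s,2-s;\alpha+2;1)$. The identity $\frac{\alpha+2}{s-1}-\frac{\alpha/2+1}{\alpha/2+1+n}=\frac{(\alpha/2+1)(\alpha+3-s+2n)}{(s-1)(\alpha/2+1+n)}$ then merges the two series termwise into the ${}_4F_3$.
\end{itemize}
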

\begin{proof}
See Section \ref{se:proofsbet124}.
\end{proof}
The ${}_3F_2$ evaluated at $z=1$ is known as \textit{Clausen's series} and has been studied in the special functions literature \cite{C21,MP12}. These formulae simplify further if we consider the negative integer moments $s=k$ with $k \in \mathbb{N}$, since the hypergeometric series in \eqref{hesymp}, \eqref{heortho} terminate and consists of only $k-1$ terms.
\begin{corollary}[Integer case for $\beta=1$ and $\beta=4$]
	\label{cor:int}
Consider the limiting moments defined in \eqref{limit-exists} for any $k \in \mathbb{N}$ with $k<\alpha+1$. Then we have
\begin{equation}
\begin{split}
\mathcal{M}^{(4)}(-k,\alpha) = \,& 8^{k-1}\frac{\Gamma(\alpha-k+1)\Gamma(k-\frac{1}{2})}{\sqrt{\pi}k!\Gamma(\alpha+k)}\\
&-\frac{2^{k-1}}{(k-2)!}\sum_{j=0}^{k-2}\binom{k-2}{j}\frac{(-1)^{j}\Gamma(\alpha-k+1+j)}{(\alpha+2+2j)\Gamma(\alpha+2+j)},
\end{split}
\end{equation}
and
\begin{equation}
\begin{split}
\mathcal{M}^{(1)}(-k,\alpha) = \,& 8^{k-1}\frac{2\Gamma(2\alpha-k+1)\Gamma(k-\frac{1}{2})}{\sqrt{\pi}k!\Gamma(2\alpha+k)}\\
&-\frac{2^{k-1}}{(k-2)!}\sum_{j=0}^{k-2}\binom{k-2}{j}\frac{(-1)^{k+j}\Gamma(2\alpha-1-j)}{(1+j-\alpha)\Gamma(2\alpha+k-j)},
\end{split}
\end{equation}
where the sums are set equal to zero if $k=1$.
\end{corollary}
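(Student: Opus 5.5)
The plan is to specialize the Mellin transform formulas \eqref{hesymp} and \eqref{heortho} of Theorem \ref{thmbet1bet4} to $s=k\in\mathbb{N}$ with $k<\alpha+1$, which lies in $\mathcal{S}$ for $k\ge1$. In both formulas the first term comes straight from Theorem \ref{thm:bet2}:
\[
2^{k-1}\mathcal{M}^{(2)}(-k,\alpha)=2^{k-1}4^{k-1}\frac{\Gamma(\alpha+1-k)\Gamma(k-\frac12)}{\sqrt{\pi}\,k!\,\Gamma(k+\alpha)},
\]
which is exactly the $8^{k-1}$ term. Likewise $2^{k}\mathcal{M}^{(2)}(-k,2\alpha)$ produces the $\beta=1$ leading term with its extra factor $2$. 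The remaining work is to evaluate the terminating Clausen series.

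For $\beta=4$, the upper parameter $2-k$ makes the ${}_3F_2$ terminate at $n=k-2$. I will use three elementary simplifications:
\begin{itemize}
\item $(2-k)_n/n!=(-1)^n\binom{k-2}{n}$;
\item $(\frac{\alpha}{2}+1)_n/(\frac{\alpha}{2}+2)_n=(\alpha+2)/(\alpha+2+2n)$;
\item $(\alpha+1-k)_n/(\alpha+2)_n=\Gamma(\alpha+1-k+n)\Gamma(\alpha+2)/\bigl(\Gamma(\alpha+1-k)\Gamma(\alpha+2+n)\bigr)$.
\end{itemize}
Multiplying by the prefactor $\Gamma(\alpha+1-k)/(\Gamma(\alpha+3)\Gamma(k-1))$, the factor $\Gamma(\alpha+1-k)$ cancels. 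Since $(\alpha+2)\Gamma(\alpha+2)=\Gamma(\alpha+3)$, that factor cancels too, and $\Gamma(k-1)=(k-2)!$. This gives the stated sum directly. For $k=1$ the factor $1/\Gamma(s-1)$ vanishes, which matches the convention that the sum is zero.

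For $\beta=1$ the series terminates through $(-k)_n/n!=(-1)^n\binom{k}{n}$, giving $k+1$ terms. Using $(\alpha)_n/(\alpha+1)_n=\alpha/(\alpha+n)$, the general term is
\[
(-1)^n\binom{k}{n}\frac{\alpha}{\alpha+n}\,\frac{\Gamma(2\alpha+1-k+n)}{\Gamma(2\alpha+1-k)}\,\frac{\Gamma(2\alpha)}{\Gamma(2\alpha+n)}.
\]
The target has only $k-1$ terms, with summand $\Gamma(2\alpha-1-j)/\bigl((1+j-\alpha)\Gamma(2\alpha+k-j)\bigr)$ and no separate $\Gamma(\alpha+1-k)/\Gamma(\alpha+1+k)$ term. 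So the middle term $2^{k-1}\Gamma(\alpha+1-k)/\Gamma(\alpha+1+k)$ of \eqref{heortho} must be absorbed.

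My approach for this absorption is partial fractions in $n$. The ratio $\Gamma(2\alpha+1-k+n)/\Gamma(2\alpha+n)$ equals $1/\bigl((2\alpha+n-k)\cdots(2\alpha+n-1)\bigr)$, a rational function of $n$. Together with $\alpha/(\alpha+n)$, it is then summed against $(-1)^n\binom{k}{n}$ using the identity
\[
\sum_{n=0}^{k}(-1)^n\binom{k}{n}\frac{1}{x+n}=\frac{k!\,\Gamma(x)}{\Gamma(x+k+1)}.
\]
The pole at $n=-\alpha$ should produce a multiple of $\Gamma(\alpha)\,k!/\Gamma(\alpha+k+1)$. I expect this to combine with, or cancel, the middle term $2^{k-1}\Gamma(\alpha+1-k)/\Gamma(\alpha+1+k)$. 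The remaining poles at $n=k-2\alpha-i$ should give the $k-1$ surviving terms after re-indexing $j\leftrightarrow k-n$ or a similar reflection.

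Alternatively, I could apply a terminating ${}_3F_2$ transformation, such as Sheppard's, to reverse the series before matching. The main obstacle is exactly this bookkeeping for $\beta=1$. One must verify that the boundary contributions at $n=0$, $n=k-1$ and $n=k$ cancel against the middle term, and that the coefficients reorganize into $\binom{k-2}{j}(-1)^{k+j}/(1+j-\alpha)$. I would check the result for $k=1,2$ first to fix signs and the index shift.
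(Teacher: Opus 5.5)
Your $\beta=4$ computation is complete and is exactly the paper's argument: specialize \eqref{hesymp} at $s=k$ and let the ${}_3F_2$ terminate. For $\beta=1$ the paper only remarks that the series terminates. For \eqref{heortho} that leaves $k+1$ terms plus the separate middle term, so some further reduction is needed. Your partial-fraction plan supplies one, and it does close, but you stopped before the decisive step and two of your expectations are off.

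First, $\Gamma(2\alpha+1-k+n)/\Gamma(2\alpha+n)=1/\prod_{i=1}^{k-1}(2\alpha-k+i+n)$ has $k-1$ factors, not $k$. Set $R(n)=\alpha/\bigl((\alpha+n)\prod_{i=1}^{k-1}(2\alpha-k+i+n)\bigr)$, so that the $n$-th term of the ${}_3F_2$ is $(-1)^n\binom{k}{n}\frac{\Gamma(2\alpha)}{\Gamma(2\alpha+1-k)}R(n)$. The poles of $R$ are simple, since $k<\alpha+1$ excludes $\alpha\in\{1,\dots,k-1\}$.

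\begin{itemize}
\item The residue at $n=-\alpha$ is $c_0=\alpha\Gamma(\alpha+1-k)/\Gamma(\alpha)$. After your binomial identity and the prefactor $-2^{k}\Gamma(2\alpha+1-k)/(\Gamma(2\alpha+1)\,k!)$, it contributes exactly $-2^{k-1}\Gamma(\alpha+1-k)/\Gamma(\alpha+1+k)$. This cancels the middle term of \eqref{heortho} identically.
\item The residue at $n=-(2\alpha-k+i)$ is $c_i=\alpha(-1)^{i-1}/\bigl((k-i-\alpha)(i-1)!\,(k-1-i)!\bigr)$. It contributes $-2^{k-1}(c_i/\alpha)\,\Gamma(2\alpha-k+i)/\Gamma(2\alpha+i+1)$.
\item Putting $j=k-1-i$ gives precisely the $j$-th term of the target sum, including the sign $(-1)^{k+j}$ and the factor $\binom{k-2}{j}/(k-2)!$.
\end{itemize}
So there are no boundary contributions at $n=0,k-1,k$ to track. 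The re-indexing is over the pole label $i$, not $j\leftrightarrow k-n$, because $n$ has already been summed out.

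The stated $\beta=1$ form is also exactly the image of the $\beta=4$ formula under the duality in the remark after Corollary \ref{thm:momsbeta}, namely $\mathcal{M}^{(1)}(-k,\alpha)=-2\,\mathcal{M}^{(4)}(-k,-2\alpha)$. Using $\Gamma(z)\Gamma(1-z)=\pi/\sin(\pi z)$, one has $\Gamma(1-2\alpha-k)/\Gamma(k-2\alpha)=-\Gamma(2\alpha+1-k)/\Gamma(2\alpha+k)$. Likewise $\Gamma(1-2\alpha-k+j)/\Gamma(2-2\alpha+j)=(-1)^{k+1}\Gamma(2\alpha-1-j)/\Gamma(2\alpha+k-j)$, and $-2\alpha+2+2j=2(1+j-\alpha)$. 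Together these reproduce the $\beta=1$ formula term by term; the paper uses the same device for \eqref{beta4solrec}.

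That route is one line, but it imports the duality from \cite{FD16}. It also requires reading the integer-$k$ formulas as rational functions of $\alpha$ continued to negative arguments. Your route stays entirely within Theorem \ref{thmbet1bet4}, and as a by-product it checks that \eqref{heortho} is consistent with that duality.
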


\begin{remark}
We also establish linear first order recurrence relations for $\mathcal{M}^{(4)}(-k,\alpha)$ and $\mathcal{M}^{(1)}(-k,\alpha)$, see Section \ref{se:rec}, equation \eqref{rec}.
\end{remark}

\subsection{Main results for general {\texorpdfstring{$\beta > 0$}{beta}}}
\label{se:betaresults}
For general $\beta>0$, the approaches used for the exactly solvable cases $\beta \in \{1,2,4\}$ mentioned before, particularly the orthogonal polynomial technique, cannot be applied. Instead, formulae for \eqref{momsintro} have been obtained as sums over partitions \cite{FD16,MRW17} or as nested contour integrals \cite[Appendix I]{FD16}. In \cite{FD16}, such moments were studied in the context of the replica method which involves the formal limit $N \to 0$ of the moments. 

We state the next result as a corollary as it follows from \cite{FD16} without too much work. Let $\eta$ be a partition of an integer $k$ having length $\ell(\eta)$ and let $\eta_{1},\eta_{2},\ldots$ denote the individual parts. The weight of $\eta$ is defined as $|\eta| := \sum_{i=1}^{\ell(\eta)}\eta_{i}$. We adopt the convention that the parts are weakly decreasing $\eta_{1} \geq \eta_{2} \geq \ldots \geq \eta_{\ell(\eta)}\geq 1$. 
\begin{corollary}
\label{thm:momsbeta}
The limit \eqref{limit-exists} exists for any $\beta>0$ and $s=k \in\mathbb{N}$ with $k<\alpha+1$. We have
\begin{equation}
\mathcal{M}^{(\beta)}(-k,\alpha) = \sum_{\eta, |\eta|=k}A^{(\beta)}_{\eta}\alpha^{-}_{\eta} \label{betform}
\end{equation}
where the sum is taken over all partitions $\eta$ of the integer $k$. The coefficient $A^{(\beta)}_{\eta}$ is given in terms of $\kappa = \frac{\beta}{2}$ as
\begin{equation}
\begin{split}
A^{(\beta)}_{\eta} &:= \frac{k(\eta_{1}-1)!\kappa^{k}} {(\kappa(\ell(\eta)-1)+1)_{\eta_1}} \prod_{i=2}^{\ell(\eta)} \frac{(\kappa(1-i))_{\eta_{i}}} {(\kappa(\ell(\eta)-i)+1)_{\eta_{i}}}
\prod_{i=1}^{\ell(\eta)}\frac{1}{(\kappa(\ell(\eta)-i+1))_{\eta_i}}\\
&\times \prod_{1 \leq i < j \leq \ell(\eta)}\left(\frac{\kappa(j-i)+\eta_{i}-\eta_{j}}{\kappa(j-i)}\right)\frac{(\kappa(j-i+1))_{\eta_{i}-\eta_{j}}}{(\kappa(j-i-1)+1)_{\eta_{i}-\eta_{j}}} \label{AlargeN}
\end{split}
\end{equation}
and
\begin{equation} \label{e:a2}
\alpha^{-}_{\eta} := \prod_{i=1}^{\ell(\eta)}(\alpha+1+\kappa(i-1))_{-\eta_i}.
\end{equation}
In this formula the rising factorial is $(x)_{n} = x(x+1)\ldots(x+n-1)$ and $(x)_{-n} = \frac{1}{(x-1)(x-2)\ldots(x-n)}$.
\end{corollary}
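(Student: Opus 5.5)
The plan is to start from the exact finite-$N$ formula of Fyodorov and Le Doussal \cite{FD16} and take the limit $N\to\infty$ term by term; since the sum over partitions of $k$ is finite, the limit of the sum is the sum of the limits. Concretely, \cite{FD16} (see also \cite{MRW17}) expresses the negative moment as
\[
\mathcal{M}^{(\beta)}_N(-k,\alpha)=\sum_{|\eta|=k} c^{(\beta)}_{\eta}\,\prod_{i}\frac{(N-i+1)\text{-type factors}}{\cdots}\,\prod_{i=1}^{\ell(\eta)}(\alpha+1+\kappa(i-1))_{-\eta_i}.
\]
This comes from expanding the power sum $p_k(\lambda^{-1})$ in Jack polynomials $P^{(1/\kappa)}_\eta(\lambda^{-1})$. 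The expansion coefficients are known explicitly, essentially through the hook-type products and the factor $(\eta_1-1)!\prod_{i\ge 2}(\kappa(1-i))_{\eta_i}$ visible in \eqref{AlargeN}. One then uses the Laguerre $\beta$-ensemble integral for inverse Jack polynomials: $\mathbb{E}\,P_\eta(\lambda^{-1})$ equals $P_\eta(1^N)$ times a ratio of generalized Pochhammer symbols. This ratio is precisely $[\alpha+1+\kappa(N-1)+\dots]$-type factors, which after reindexing produce $\alpha^-_\eta$ in \eqref{e:a2} together with $N$-dependent factors $[\kappa N]^{(\kappa)}_\eta$-like Pochhammers.

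\textbf{Step 1.} Write down the exact formula carefully with all $N$-dependence isolated. The $N$-dependence appears only in two places:
\begin{itemize}
\item $P_\eta(1^N)=\prod_{(i,j)\in\eta}\frac{N\kappa-(i-1)\kappa+j-1}{\text{hook}}$, a product of $|\eta|=k$ linear factors in $N$;
\item possibly in the generalized Pochhammer $(\alpha+1+\kappa(N-1))$-type denominators coming from the inverse moment.
\end{itemize}

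\textbf{Step 2.} Check that the total degree in $N$ of each term is exactly $k$. Then $N^{-k}$ times each term converges to its leading coefficient. Since $P_\eta(1^N)\sim (\kappa N)^{k}/\prod\text{hooks}$, the $\kappa^k$ in \eqref{AlargeN} comes from this leading behaviour. The remaining hook products, combined with the Jack-expansion coefficients of $p_k$, should assemble into the products over $i<j$ in \eqref{AlargeN}. Those products are the standard forms of $\prod(\text{upper hooks})/(\text{lower hooks})$ for a partition written row by row, via the identity expressing hook products as $\prod_{i<j}(\kappa(j-i+1))_{\eta_i-\eta_j}/(\kappa(j-i-1)+1)_{\eta_i-\eta_j}$ times single-row factors $(\kappa(\ell-i+1))_{\eta_i}$ and $(\kappa(\ell-i)+1)_{\eta_i}$.

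\textbf{Step 3.} Verify that the $\alpha$-dependent factor survives the limit unchanged as $\alpha^-_\eta$, and that the condition $k<\alpha+1$ makes every factor $(\alpha+1+\kappa(i-1))_{-\eta_i}$ finite. The condition also guarantees that the expectation at finite $N$ is finite. Indeed, the hard-edge density behaves like $\lambda^{\alpha}$ near $0$, so $\mathbb{E}\lambda_j^{-k}<\infty$ iff $k<\alpha+1$.

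I expect the main obstacle to be the bookkeeping in Step 2. The hook-length products must be rewritten in the row-indexed form \eqref{AlargeN}, and one must confirm that no cancellation lowers the $N$-degree of some term; the latter would signal a different scaling. The exact formula in \cite{FD16} itself is taken as given, so beyond this rearrangement only the elementary limit $\prod_{(i,j)\in\eta}(\kappa(N-i+1)+j-1)/N^k\to\kappa^k$ is required.
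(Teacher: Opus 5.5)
Your proposal is correct and follows the paper's proof: take the Fyodorov--Le Doussal finite-$N$ formula as given, note that the sum over partitions of $k$ is finite, and use $N^{-k}\prod_{i}(\kappa(N-i+1))_{\eta_i}\to\kappa^k$ term by term. In that formula (Theorem~\ref{thmfd}), $\alpha^-_\eta$ is already $N$-independent and the coefficients are already in row-indexed form, so the possible $N$-dependence you hedge on in Step 1 and the hook-length bookkeeping you anticipate in Step 2 never arise.
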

\begin{proof}
    See Section \ref{sec:fdresults}.
\end{proof}
Here we present some lower order moments as examples, obtained from \eqref{betform} with the help of computer algebra.
\begin{equation}\label{e:limitMoments-intro}
\begin{split}
\mathcal{M}^{(\beta)}(-1,\alpha) &= \frac{1}{\alpha}\\
\mathcal{M}^{(\beta)}(-2,\alpha) &= \frac{\beta}{\alpha(\alpha-1)(2\alpha+\beta)}\\
\mathcal{M}^{(\beta)}(-3,\alpha) &= \frac{\beta^{2}}{\alpha(\alpha-1)(\alpha-2)(2\alpha+\beta)(\alpha+\beta)}
\end{split}
\end{equation}
and
\begin{equation}
\begin{split}
&\mathcal{M}^{(\beta)}(-4,\alpha) \\
&= \frac{\beta^{3}(3\beta+5\alpha-6)}{\alpha(\alpha-1)(\alpha-2)(\alpha-3)(2\alpha+\beta)(\alpha+\beta)(2\alpha+3\beta)(2\alpha-2+\beta)}.
\label{e:limitMoments4}
\end{split}
\end{equation}
\begin{remark}
These moments satisfy a duality under the mapping $\beta \to \frac{4}{\beta}$ in the sense that the following identity holds
\begin{equation}
\mathcal{M}^{(\beta)}(-k,\alpha) = \left(-\frac{2}{\beta}\right)\mathcal{M}^{(\frac{4}{\beta})}\left(-k,-\frac{2\alpha}{\beta}\right).
\end{equation}
This duality is a consequence of \cite[equation (90)]{FD16}, see also \cite{F22}.
\end{remark}
Our final result shows a relation between these moments and the Bessel zeta function. Of particular interest here are the even integer evaluations of $\zeta_{\nu}(s)$. As for $\zeta(s)$, \eqref{besselzeta} has known explicit values when $s=2k$ is an even integer,
\begin{equation}
\begin{split}
\zeta_{\nu}(2) &= \frac{1}{2^{2}(\nu+1)}\\
\zeta_{\nu}(4) &= \frac{1}{2^{4}(\nu+1)^{2}(\nu+2)}\\
\zeta_{\nu}(6) &= \frac{1}{2^{5}(\nu+1)^{3}(\nu+2)(\nu+3)}\\
\zeta_{\nu}(8) &= \frac{5\nu+11}{2^{8}(\nu+1)^{4}(\nu+2)^{2}(\nu+3)(\nu+4)}\\
&\vdots
\label{besselzetmoments}
\end{split}
\end{equation}
etc. These formulas are attributed to Rayleigh \cite[Chapter 15]{W44}. There are known recursion relations which allow the explicit computation of these quantities to any desired order. To see the connection with the present study, first note the structural similarity with formulae \eqref{e:limitMoments-intro} and \eqref{e:limitMoments4}. Second, substituting the scaling $\alpha = \frac{\beta}{2}(\nu+1)$ into \eqref{e:limitMoments-intro} and \eqref{e:limitMoments4} and extracting the leading term as $\beta \to \infty$ exactly reproduces the formulae in \eqref{besselzetmoments}. We show in the following that this holds for all higher moments.

\begin{theorem}
\label{thm:lowtemp}
Let $\alpha = \frac{\beta}{2}(\nu+1)$ in \eqref{laguerrebeta} and consider the inverse moments \eqref{momsintro}. For any sequence $\beta:= \beta_{N}$ such that $\beta_{N} \to \infty$ as $N \to \infty$ and any $k \in \mathbb{N}$, we have
\begin{equation}
\lim_{N \to \infty}\left(\frac{\beta_{N}}{8N}\right)^{k}\mathcal{M}_{N}^{(\beta_{N})}\left(-k,\frac{\beta_{N}}{2}(\nu+1)\right) = \zeta_{\nu}(2k), \label{largebetmoms}
\end{equation}
where $\zeta_{\nu}(2k)$ is the Bessel zeta function \eqref{besselzeta}.
\end{theorem}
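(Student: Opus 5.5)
The plan is to avoid treating the joint limit $N\to\infty$, $\beta_N\to\infty$ as an iterated limit. Instead, I would show that after rescaling the moment is a rational function of the two small parameters $1/N$ and $1/\kappa$, with $\kappa=\beta/2$, and that this function is continuous at the origin. Then any joint limit equals the iterated limit, and I can take $\kappa\to\infty$ first at fixed $N$, which is the easier order.

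First, I would take the finite-$N$ version of the Fyodorov--Le Doussal partition formula \cite{FD16}, the same one behind Corollary \ref{thm:momsbeta} before its $N\to\infty$ limit. It writes $\mathcal{M}^{(\beta)}_N(-k,\alpha)$ as a finite sum over partitions $\eta\vdash k$. Each term is an $N$- and $\kappa$-dependent coefficient times $\alpha^-_\eta$. Substituting $\alpha=\kappa(\nu+1)$ gives
\[
\alpha^-_\eta=\prod_i \bigl(\kappa(\nu+i)+1\bigr)_{-\eta_i},
\]
which is $\kappa^{-k}$ times a rational function of $1/\kappa$ that is regular at $1/\kappa=0$. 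The coefficients are products of rising factorials in $N$ and $\kappa$ with fixed length at most $k$. Hence
\[
G_k(N,\kappa):=\Bigl(\frac{\kappa}{4N}\Bigr)^{k}\mathcal{M}_N^{(2\kappa)}\bigl(-k,\kappa(\nu+1)\bigr)
\]
is a rational function of $(1/N,1/\kappa)$. Note that $(\beta/8N)^k=(\kappa/4N)^k$, so the left side of \eqref{largebetmoms} is exactly $G_k(N,\kappa)$.

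The key step, and the one I expect to be the main obstacle, is degree counting. I must show that in each partition term the total degree in $(N,\kappa)$ of the numerator never exceeds that of the denominator after normalisation, and that the denominators do not vanish near $(1/N,1/\kappa)=(0,0)$. Together these make $G_k$ extend continuously to the origin. I would first try to factor each term as a product of ratios of the form $(aN+b\kappa+c)/(a'N+b'\kappa+c')$, each of which is bounded and continuous in the joint limit when $a'N+b'\kappa$ is nondegenerate. The check would be on the lowest cases $k=1,2$ against \eqref{e:limitMoments-intro}. If a clean factorisation fails, the fallback is a direct bound
\[
|G_k(N,\kappa)-G_k(N,\infty)|\le C_k/\kappa
\]
uniformly in $N$.

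Granting continuity, I would compute the limit via $\kappa\to\infty$ at fixed $N$. In this limit the $\beta$-Laguerre density \eqref{laguerrebeta} concentrates on the minimiser of $\sum_j\bigl((\nu+1)\log y_j-y_j\bigr)+\sum_{i<j}\log|y_i-y_j|$, with $\lambda_j=\kappa y_j$. By Stieltjes' electrostatic characterisation this minimiser is the zero set of $L_N^{(\nu)}$. Alternatively, the same fact can be read off the limit of the partition formula. Thus
\[
G_k(N,\infty)=\sum_{n=1}^N (4N y_{n,N})^{-k},
\]
where $y_{n,N}$ are the zeros of $L_N^{(\nu)}$.

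Finally, by the Mehler--Heine asymptotics $4Ny_{n,N}\to j_{\nu,n}^2$ for each fixed $n$. Known lower bounds on Laguerre zeros give $4Ny_{n,N}\ge c\,n^2$ uniformly in $N$. For $k\ge1$ this gives a summable dominating sequence $n^{-2k}$, so dominated convergence yields $G_k(N,\infty)\to\zeta_\nu(2k)$. Combined with the continuity of $G_k$ at the origin, this gives \eqref{largebetmoms}.
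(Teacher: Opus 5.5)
Your architecture matches the paper's. You use the Fyodorov--Le Doussal formula to control the joint limit. You take $\beta\to\infty$ at fixed $N$ to produce the zeros of $L_N^{(\nu)}$; you do this via Stieltjes and Laplace's method, the paper via the Dumitriu--Edelman bidiagonal model, and either works. You then use Mehler--Heine with $l^{(N)}_n(\nu)>j_{\nu,n}^2/(4N+2\nu+2)$ for dominated convergence. The gap is the step you flag as the main obstacle. That step is exactly what makes the result hold for an arbitrary sequence $\beta_N\to\infty$, and the criteria you propose for it are not valid. Matching total degrees in $(N,\kappa)$, with denominators nonvanishing for large $N,\kappa$, does not give a joint limit: $N\kappa/(N^2+\kappa^2)$ has both properties and no limit. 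In the variables $(1/N,1/\kappa)$ its denominator vanishes at the origin; what matters is the separate degrees, not the total degree. Likewise a ratio $(aN+b\kappa+c)/(a'N+b'\kappa+c')$ has a joint limit only when $(a,b)\propto(a',b')$; $N/(N+\kappa)$ is bounded, but its limit depends on how $N/\kappa$ behaves. The factors in \eqref{e:A} are not of this form anyway, since $N$ enters through $\kappa(N-i+1)+m$. Your fallback bound $|G_k(N,\kappa)-G_k(N,\infty)|\le C_k/\kappa$ is the right target, but you give no argument for it.

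What closes the gap is that \eqref{e:A} separates the variables. $N$ occurs only in $\prod_i(\kappa(N-i+1))_{\eta_i}=(\kappa N)^{k}\prod_i\prod_{m=0}^{\eta_i-1}\bigl(1-\frac{i-1}{N}+\frac{m}{\kappa N}\bigr)$. This equals $(\kappa N)^k\bigl(\prod_i(1-\frac{i-1}{N})^{\eta_i}+O(1/\kappa)\bigr)$ uniformly in $N\ge1$, and every other factor depends on $\kappa$ alone. One then checks, as in the proof of Corollary \ref{cor:largebeta}, that the remaining factors are $\kappa^{-k}(c_\eta+O(1/\kappa))$ and that $\alpha^-_\eta=\kappa^{-k}(c'_\eta+O(1/\kappa))$. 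Together with $(\kappa/4N)^k$ and $(\kappa N)^k$, all powers of $\kappa$ cancel, and this gives your uniform bound.

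A shortcut avoids the degree count. Since $N$ enters only through that product, $G_k(N,\kappa)$ is a polynomial in $1/N$ of degree at most $k$ with $\kappa$-dependent coefficients. Your Laplace argument gives convergence as $\kappa\to\infty$ at $N=1,\ldots,k+1$. Vandermonde inversion then forces each coefficient to converge, so the convergence is uniform in $N\ge1$.

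Finally, correct the fixed-$N$ step. With $\lambda_j=\kappa y_j$ the density is $\exp\bigl(\kappa[\sum_j((\nu+1)\log y_j-y_j)+2\sum_{i<j}\log|y_i-y_j|]\bigr)$ up to constants, and you need its maximiser, not a minimiser. Without the factor $2$ the critical points are half the zeros of $L_N^{(2\nu+1)}$, not the zeros of $L_N^{(\nu)}$.
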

\begin{proof}
    See Section \ref{sec:fdresults}.
\end{proof}
\section*{Acknowledgements} 
A. M. and N. S. are grateful for support from the Royal Society, grant URF$\backslash$R$\backslash$221017 and URF\textbackslash R\textbackslash231028 respectively.

\section{Overview of approach and proofs for {\texorpdfstring{$\beta \in \{1,2,4\}$}{beta = 1,2,4}}}
\label{sec:proofs}
We begin by introducing the one-point eigenvalue distribution, defined as
\begin{equation}
\rho^{(\beta)}_{N}(x) = N\int_{0}^{\infty}\ldots \int_{0}^{\infty}P_{\beta,\alpha}(x,\lambda_{2},\ldots,\lambda_N)d\lambda_{2}\ldots d\lambda_{N} \label{one-point}
\end{equation}
where $P_{\beta,\alpha}(\lambda_1,\ldots,\lambda_{N})$ is the joint probability density function of eigenvalues given by \eqref{laguerrebeta}. From \eqref{one-point} and the permutation invariance of \eqref{laguerrebeta}, we have
\begin{equation}
\mathcal{M}_{N}^{(\beta)}(-k,\alpha) = \int_{0}^{\infty}x^{-k}\rho^{(\beta)}_{N}(x)dx. \label{momsrho}
\end{equation}
For $\beta \in \{1,2,4\}$, the integral in \eqref{one-point} has known expressions in terms of Laguerre polynomials. The simplest case is $\beta=2$ where we have
\begin{equation}
\rho^{(2)}_{N}(x) = x^{\alpha}e^{-x}\sum_{j=0}^{N-1}\frac{L^{(\alpha)}_{j}(x)^{2}}{h_{j}}, \qquad h_{j} = \frac{\Gamma(j+\alpha+1)}{j!}, \label{bet2exact}
\end{equation}
where $L^{(\alpha)}_{j}(x)$ is the Laguerre polynomial of degree $j$. By the Christoffel-Darboux formula for orthogonal polynomials, this can be further simplified as 
\begin{equation}
\rho^{(2)}_{N}(x) = x^{\alpha}e^{-x}c_{N,\alpha}\left(L^{(\alpha)}_{N-1}(x)'L^{(\alpha)}_{N}(x)-L^{(\alpha)}_{N}(x)'L^{(\alpha)}_{N-1}(x)\right), \label{rho2laguerre}
\end{equation}
where the prime denotes differentiation with respect to $x$, and $c_{N,\alpha} = \frac{N!}{\Gamma(N+\alpha)}$. See \cite[Chapter 5]{Forr} for background on formulae \eqref{bet2exact} and \eqref{rho2laguerre}. The cases $\beta=1$ and $\beta=4$ are slightly more complicated, requiring a modification of the standard orthogonal polynomial approach in terms of skew-orthogonal polynomials. For $\beta=4$, we have
\begin{equation}
	\rho^{(4)}_{N}(x) = \frac{1}{2}\rho^{(2)}_{2N}(x) - \frac{(2N)!}{4\Gamma(2N+\alpha)}x^{\frac{\alpha}{2}-1}e^{-\frac{x}{2}}L^{(\alpha-1)}_{2N}(x)
    \int_{0}^{x}t^{\frac{\alpha}{2}}e^{-\frac{t}{2}}L^{(\alpha+1)}_{2N-1}(t)dt
	\label{bet4exact}
\end{equation}
and for $\beta=1$ and $N$ even, we have
\begin{equation}
	\label{bet1exact}
	\begin{split}
		\rho^{(1)}_{N}(x) &= 2\rho^{(2)}_{N}(2x)|_{\alpha \to 2\alpha}-\frac{N!}{2\Gamma(N+2\alpha)}(2x)^{\alpha}e^{-x}L^{(2\alpha+1)}_{N-1}(2x)\\
		&\times 2^{\alpha}\int_{0}^{\infty}\mathrm{sgn}(x-t)t^{\alpha-1}e^{-t}L^{(2\alpha-1)}_{N}(2t)dt,
	\end{split}
\end{equation}
where $\mathrm{sgn}(x)$ is the sign of $x$. In the case $\beta=1$ we will assume $N$ is even throughout for simplicity. Also note that \eqref{bet4exact} and \eqref{bet1exact} seem ill-defined when $-1 < \alpha \leq 0$, but this can be resolved by a suitable analytic continuation, as discussed in \cite{W99}. See \cite{W99,AFNvM00} and \cite[Chapter 6]{Forr} for a derivation of \eqref{bet4exact} and \eqref{bet1exact}.

The hard edge scaling regime corresponds to computing the limiting profiles of these densities in a neighbourhood of the origin of order $1/N$. More precisely, applying suitable asymptotics for Laguerre polynomials (see Lemma \ref{lem:lag} for the required asymptotics), the following limit is well known
\begin{equation}
\lim_{N \to \infty}\frac{1}{4N}\rho_{N}^{(\beta)}(u/4N)	= \rho^{(\beta)}_{\mathrm{HE}}(u) \label{limHE}
\end{equation}
uniformly in compact subsets of $u \in [0,\infty)$. The limiting formulae are given in terms of Bessel functions. For $\beta=2$, we have
\begin{equation}
\rho^{(2)}_{\mathrm{HE}}(u)	= \frac{1}{4}\left(J_{\alpha}(\sqrt{u})^{2}-J_{\alpha+1}(\sqrt{u})J_{\alpha-1}(\sqrt{u})\right),
\label{besselbeta2}
\end{equation}
and for $\beta=4$ and $\beta=1$, we have
\begin{align}
\rho^{(4)}_{\mathrm{HE}}(u) &= \rho^{(2)}_{\mathrm{HE}}(2u)-\frac{1}{4\sqrt{2u}}J_{\alpha-1}(\sqrt{2u})\int_{0}^{\sqrt{2u}}J_{\alpha+1}(t)dt, \label{bet4he}\\
	\rho^{(1)}_{\mathrm{HE}}(u) &= 2\rho^{(2)}_{\mathrm{HE}}(2u)|_{\alpha \to 2\alpha}+\frac{J_{2\alpha+1}(\sqrt{2u})}{2\sqrt{2u}}\int_{\sqrt{2u}}^{\infty}J_{2\alpha-1}(t)dt. \label{bet1he}
\end{align}
See \cite[Chapter 7]{Forr} for background on \eqref{limHE}-\eqref{bet1he}.
\subsection{Proof of Theorems \ref{thm:bet2} and \ref{thmbet1bet4}}
\label{se:proofsbet124}
We begin with the following tail estimate showing that the dominant contribution to the moments \eqref{momsrho} comes from the hard edge scaling regime mentioned above.
\begin{lemma}
	\label{lem:cutoff}
	Fix $\alpha > -\frac{1}{2}$. For any $\beta \in \{1,2,4\}$ and $s \in \mathbb{C}$ belonging to the strip $\frac{1}{2} < \mathrm{Re}(s) < a+1$, for all $M$ sufficiently large, we have
	\begin{equation}
		\bigg{|}\int_{M/N}^{\infty}x^{-s}\rho^{(\beta)}_{N}(x)dx\bigg{|} \leq c|N^{s}M^{-s+\frac{1}{2}}| \label{orderbound}
	\end{equation}
	where $c>0$ is a constant independent of $M$ and $N$. Consequently
	\begin{equation}
		\lim_{M \to \infty}\lim_{N \to \infty}N^{-s}\int_{M/N}^{\infty}x^{-s}\rho^{(\beta)}_{N}(x)dx = 0.
	\end{equation}
\end{lemma}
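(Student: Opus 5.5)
The plan is to prove \eqref{orderbound} from a uniform pointwise bound on the one-point density away from the hard edge. Write $\sigma = \mathrm{Re}(s)$, so that $|x^{-s}| = x^{-\sigma}$. Heuristically, near the hard edge the unscaled Marchenko--Pastur density behaves like $\frac{1}{2\pi}\sqrt{(4N-x)/x}$. I therefore aim for a bound of the form
\begin{equation*}
\rho^{(\beta)}_N(x) \leq C\sqrt{N/x}, \qquad M/N \leq x \leq cN,
\end{equation*}
with $C$ independent of $N$ and $M$, together with a crude estimate on $[cN,\infty)$. Granting this bound, we get
\begin{equation*}
\int_{M/N}^{cN} x^{-\sigma}\sqrt{N/x}\,dx \leq \frac{\sqrt N\,(M/N)^{\frac12-\sigma}}{\sigma-\frac12} = \frac{N^{\sigma}M^{\frac12-\sigma}}{\sigma-\frac12}.
\end{equation*}
This uses $\sigma>\frac12$ and has exactly the form $|N^{s}M^{-s+\frac12}|$ up to a constant. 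On $[cN,\infty)$, the total mass $\int\rho^{(\beta)}_N = N$ gives a contribution at most $(cN)^{-\sigma}N = O(N^{1-\sigma})$. This is $\leq N^{\sigma}M^{\frac12-\sigma}$ once $N^{1-2\sigma}\lesssim M^{\frac12-\sigma}$, which holds for all $N$ large in terms of $M$, and that is all the double limit requires. The "consequently" part then follows at once, since $N^{-s}$ times the tail is $O(M^{\frac12-\sigma})\to 0$.

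For $\beta=2$, I would use the sum-of-squares representation \eqref{bet2exact} rather than the Christoffel--Darboux form, because it contains no cancellations. I would invoke the classical uniform bounds on orthonormal Laguerre functions (Szeg\H{o}, Erd\'elyi--Magnus--Nevai type) in the oscillatory region $1/j \lesssim x \lesssim j$:
\begin{equation*}
\frac{x^{\alpha}e^{-x}L^{(\alpha)}_j(x)^2}{h_j} \leq \frac{C}{\sqrt{jx}}.
\end{equation*}
Beyond $x\approx 4j$ there is Airy-type and then exponential decay, and for $x\lesssim 1/j$ the Bessel-type bound $O(1)$ holds. Splitting the sum over $j<N$ according to whether $x\lesssim j$ or $x\gtrsim j$, we get $\sum_{j\lesssim N} (jx)^{-1/2}\lesssim \sqrt{N/x}$ from the oscillatory indices. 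The indices with $j\lesssim x$ contribute at most $O(\#\{j\lesssim x\}\cdot x^{-1})$ plus exponentially small terms, and this is also $\lesssim\sqrt{N/x}$ for $x\leq cN$. These are the required asymptotics, presumably to be packaged in Lemma \ref{lem:lag}.

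For $\beta\in\{1,4\}$, I would use \eqref{bet4exact} and \eqref{bet1exact}. The first terms are $\beta=2$ densities with parameters changed ($2N$, or $\alpha\to2\alpha$ and $x\to 2x$), so they are covered by the bound above. For the correction terms, I would bound the prefactor times the Laguerre factor pointwise by the same Laguerre-function estimates, which give roughly $N^{1/2}\cdot (Nx)^{-1/4}x^{-1/2}$ after normalisation. I would then show that the integral factor (the running integral $\int_0^x$ for $\beta=4$, or $\int\mathrm{sgn}(x-t)\cdots$ for $\beta=1$) is $O(N^{-1/2}(Nx)^{-1/4}\cdot\text{const})$, or at least bounded suitably. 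Multiplying the two should reproduce a bound $\lesssim\sqrt{N/x}$.

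I expect this last step to be the main obstacle: the integrals of oscillating Laguerre functions require cancellation, and crude absolute bounds lose powers of $N$. To handle it, I would first try integrating by parts, using $\frac{d}{dt}\big[t^{a}e^{-t}L^{(a)}_n(t)\big] = (n+1)t^{a-1}e^{-t}L^{(a-1)}_{n+1}(t)$, or equivalently the identity $L^{(\alpha+1)}_{n}=\sum_{m\leq n}L^{(\alpha)}_m$. This rewrites the integral as a boundary term, controlled by the pointwise bounds, plus a smaller remainder. If that fails, I would instead control the complete integral $\int_0^\infty$ exactly via known Laguerre integrals and estimate only the complementary tail integral.
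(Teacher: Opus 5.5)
Your far-region estimate via the total mass $\int_0^\infty\rho^{(\beta)}_N=N$ is sound, and so is your $\beta=2$ argument, which takes a different route from the paper. The paper rewrites the Christoffel--Darboux form as \eqref{luebeta2dens} to remove the leading-order cancellation. It then needs only the fixed-degree bound \eqref{lag-main-bnd} for $L_N$ and $L_{N-1}$. Your sum of squares \eqref{bet2exact} involves no cancellation, at the price of importing Laguerre-function bounds uniform in the degree $j<N$, including the turning-point region. One small correction: for $\alpha<0$ the bound when $jx\lesssim 1$ is of order $(jx)^{\alpha}$, not $O(1)$. This is harmless, since its sum over $1\le j\lesssim 1/x$ is still $O(1/x)\le O(\sqrt{N/x})$ for $x\ge M/N$.

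The gap is in $\beta\in\{1,4\}$. You never produce the bound on the integral factor, and the target you state is wrong. With your prefactor $N^{1/2}(Nx)^{-1/4}x^{-1/2}$, what is needed is that the correspondingly normalised integral be $O((Nx)^{1/4})$, which is a \emph{growing} bound. Your target $O(N^{-1/2}(Nx)^{-1/4})$ is false: at $x=u/N$ the normalised running integral tends to a nonzero hard-edge quantity of order one. It would also give $N^{-1/2}x^{-1}$ rather than $\sqrt{N/x}$. This miscalculation is what leads you to expect that cancellation is needed.

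For $\beta=4$ no cancellation is needed. By \eqref{lag-main-bnd}, $t^{\alpha/2}e^{-t/2}L^{(\alpha+1)}_{2N-1}(t)=O(N^{\frac{\alpha}{2}+\frac14}t^{-3/4})$ on $[M/N,x]$, so the absolute integral there is $O(N^{\alpha/2}(Nx)^{1/4})$. The interval $[0,M/N]$ contributes $O(N^{\alpha/2})$ by \eqref{hilb}. Multiplying by the prefactor $O(N^{\frac{1-\alpha}{2}}(Nx)^{-1/4}x^{-1/2})$ gives exactly $O(\sqrt{N/x})$, which is the paper's argument.

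For $\beta=1$ cancellation is genuinely needed, but only in the complete integral. Write $\int_0^\infty\mathrm{sgn}(x-t)f(t)\,dt=2\int_0^x f-\int_0^\infty f$, and bound $\int_0^x f=O(N^{\alpha-3/4}x^{1/4})$ crudely as above. A crude bound on $b_{N,\alpha}=\int_0^\infty f$ gives only $O(N^{\alpha-1/2})$. That yields a contribution $N^{3/4}x^{-3/4}$ to $\rho^{(1)}_N$, whose integral against $x^{-\mathrm{Re}(s)}$ over $[M/N,\delta N]$ exceeds $|N^{s}M^{\frac12-s}|$ by a factor of order $N^{1/2}$. You must therefore use the exact evaluation \cite[Eq. (7.131)]{Forr}, which gives $b_{N,\alpha}=O(N^{\alpha-1})$. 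This is your second fallback, but the crude estimate must be applied to $\int_0^x$: a crude estimate on $\int_x^\infty$ loses as much as one on $\int_0^\infty$.

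Finally, for $-\frac12<\alpha\le 0$ the proposal misses two issues. First, $t^{\alpha-1}e^{-t}L^{(2\alpha-1)}_N(2t)$ is not integrable at $0$, so \eqref{bet1exact} must be read by analytic continuation; the paper integrates by parts on $[0,e^{-N}]$. Second, $L^{(\alpha-1)}_{2N}$ and $L^{(2\alpha-1)}_N$ then have parameter $\le -1$, outside the range of the uniform bounds you invoke. The paper reduces these to parameter $>-1$ via $L^{(a)}_n=L^{(a+1)}_n-L^{(a+1)}_{n-1}$.
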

We postpone the proof of this lemma to Section \ref{sec:tail}. The above illustrates why the Marchenko-Pastur limit \eqref{mplaw} does not give the correct order of the moments, since \eqref{mplaw} would predict a leading term of order $N^{1-s}$ which is sub-leading to $N^{s}$ on the right-hand side of \eqref{orderbound} for every $\frac{1}{2} < \mathrm{Re}(s) < \alpha+1$. Lemma \ref{lem:cutoff} implies
\begin{equation}
	\mathcal{M}^{(\beta)}_{N}(-s,\alpha)  \sim \int_{0}^{M/N}x^{-s}\rho^{(\beta)}_{N}(x)dx
\end{equation}
where $\sim$ indicates that the ratio of the left and right hand sides above tends to $1$ as $N \to \infty$ followed by $M \to \infty$. Changing variable $x=\frac{u}{4N}$, we write
\begin{equation}
	\mathcal{M}^{(\beta)}_{N}(-s,a) \sim N^{s}4^{s}\int_{0}^{4M}u^{-s}\frac{1}{4N}\rho^{(\beta)}_{N}(u/4N)du. \label{momsdef}
\end{equation}
Since $\frac{1}{4N}\rho^{(\beta)}_{N}(u/4N)$ converges uniformly on compact subsets of $u \in [0,\infty)$ to the hard edge density $\rho^{(\beta)}_{\mathrm{HE}}(u)$, taking the subsequent limit $M \to \infty$, we get
\begin{equation}
	\lim_{N \to \infty}N^{-s}\mathcal{M}^{(\beta)}_{N}(-s,\alpha) = 4^{s}\int_{0}^{\infty}u^{-s}\rho^{(\beta)}_{\mathrm{HE}}(u)du. \label{bet2conv}
\end{equation}
\begin{proof}[Proof of Theorem \ref{thm:bet2}]

	Substituting \eqref{besselbeta2} into the right-hand side of \eqref{bet2conv}, we change variable $u \to u^{2}$ and make use of the identity (see e.g.\,\cite{M01}),
	\begin{equation}
		\begin{split}
			\label{mellinbessels}
			E_{a,\gamma}(s)&:=\int_{0}^{\infty}u^{s-1}J_{a}(u)J_{\gamma}(u)du \\
			&= \frac{\Gamma(1-s)\Gamma(\frac{a+\gamma+s}{2})}{2^{1-s}\Gamma(\frac{\gamma-a-s+2}{2})\Gamma(\frac{a+\gamma-s+2}{2})\Gamma(\frac{a-\gamma-s+2}{2})}.
		\end{split}
	\end{equation}
This gives
\begin{equation}
\begin{split}
\int_{0}^{\infty}u^{-s}\rho^{(2)}_{\mathrm{HE}}(u)du &= \frac{1}{2}\left(E_{\alpha,\alpha}(2(1-s))-E_{\alpha-1,\alpha+1}(2(1-s))\right)\\
&= \frac{4^{1-s}\Gamma(\alpha+1-s)\Gamma(2(s-1))}{2s(s-1)\Gamma(s+\alpha)\Gamma(s-1)^{2}}.
\end{split}
\end{equation}
	Applying the duplication formula for the Gamma function and substituting the result into \eqref{bet2conv} completes the proof.
\end{proof}
We now show how to derive the same result using Theorem \ref{thm:cmos}.
\begin{proof}[Second proof of Theorem \ref{thm:bet2}]
By the reflection property given in Theorem \ref{thm:cmos}, we have
\begin{equation}
	\mathcal{M}^{(2)}_{N}(-s,\alpha) = \mathcal{M}^{(2)}_{N}(s-1,\alpha)\frac{\Gamma(\alpha+1-s)}{\Gamma(s+\alpha)}. \label{reflectidentity}
\end{equation}
Then we apply \eqref{mplaw} with $f(x) = x^{s-1}$ for $\mathrm{Re}(s) > \frac{1}{2}$, which holds since the convergence to the Marchenko-Pastur law is also known to hold for positive moments, see e.g.\ \cite{DE06}. This implies 
\begin{equation}
	\begin{split}
		\label{mpcon}
\mathcal{M}^{(2)}_{N}(s-1,\alpha) &\sim N^{s}\frac{1}{2\pi}\int_{0}^{4}\frac{\sqrt{4-x}}{\sqrt{x}}x^{s-1}dx\\
&=N^{s}\frac{1}{\sqrt{\pi}}4^{s-1}\frac{\Gamma(s-\frac{1}{2})}{\Gamma(s+1)}.
\end{split}
\end{equation}
Inserting \eqref{mpcon} into \eqref{reflectidentity} completes the proof.
\end{proof}

\begin{proof}[Proof of Theorem \ref{thmbet1bet4}]
\label{lem:cor1}
We start with $\beta=4$. Substituting \eqref{bet4he} into \eqref{bet2conv}, the first term is explicit as we have already dealt with the $\beta=2$ case above. It remains to deal with the correction term in \eqref{bet4he}, which after a suitable change of variables gives rise to the quantity
\begin{equation}
	I_{\alpha}(s) := 2^{s+\frac{1}{2}}\int_{0}^{\infty}u^{-2s}J_{\alpha-1}(u)\int_{0}^{u}J_{\alpha+1}(t)dt\,du. \label{ialph}
\end{equation}
We recall that the Bessel function can be written as
\begin{equation}
J_{\alpha}(u) = \frac{\left(\frac{u}{2}\right)^{\alpha}}{\Gamma(\alpha+1)}{}_0F_1\left(\alpha+1;-\frac{u^{2}}{4}\right). \label{besselhyper}
\end{equation}
For general $\alpha$ and non-negative $u$, we have
\begin{equation}
\int_{0}^{u}J_{\alpha}(t)dt = \frac{u^{\alpha+1}}{2^{\alpha}}\frac{1}{\Gamma(\alpha+2)}{}_1F_2\left(\frac{\alpha+1}{2};\alpha+1,\frac{\alpha+3}{2};-\frac{u^{2}}{4}\right).  \label{intbesselhyper}
\end{equation}
This follows from writing the series expansion of $J_{\alpha}(x)$ and integrating term by term, then recognising the resulting series expansion as the advertised ${}_1F_2$. Now inserting \eqref{besselhyper} and \eqref{intbesselhyper} into \eqref{ialph}, we recognise the Mellin transform of a product of two hypergeometric functions. This particular transform has been evaluated in \cite[Eq. (2.1a)]{M01} and using this we obtain
\begin{equation}
I_{\alpha}(s) = c_{\alpha}(s)\,{}_3F_2\left(\frac{\alpha+2}{2},\alpha+1-s,2-s;\frac{\alpha+4}{2},\alpha+2; 1\right),
\end{equation}
where 
\begin{equation}
    c_{\alpha}(s) = 2^{-s+\frac{3}{2}}\frac{\Gamma(\alpha+1-s)}{\Gamma(s-1)\Gamma(\alpha+3)}.
\end{equation}
This completes the proof of \eqref{hesymp}.

Now for $\beta=1$, we consider the hard edge profile in \eqref{bet1he}, which we write as
\begin{equation}
 \rho^{(1)}_{\mathrm{HE}}(x) = 2\rho^{(2)}_{\mathrm{HE}}(2x)|_{\alpha \to 2\alpha}+\frac{1}{2}\frac{J_{2\alpha+1}(\sqrt{2x})}{\sqrt{2x}}-\frac{J_{2\alpha+1}(\sqrt{2x})}{2\sqrt{2x}}\int_{0}^{\sqrt{2x}}J_{2\alpha-1}(t)dt. \label{hebet1rewrite}
\end{equation}
where we employed the definite integral $\int_{0}^{\infty}J_{2\alpha-1}(t)dt = 1$, see e.g.\ \cite[10.22.41]{NIST:DLMF}. The required integral transform of the first and third terms of \eqref{hebet1rewrite} follow identically to the previous computations and yield the first and third terms in \eqref{heortho}, while for the middle term, we have
\begin{equation}
\int_{0}^{\infty}u^{-s}\frac{J_{2\alpha+1}(\sqrt{2u})}{2\sqrt{2u}}du = 2^{-s-1}\frac{\Gamma(\alpha+1-s)}{\Gamma(\alpha+1+s)},
\end{equation}
see e.g.\ \cite[10.22.43]{NIST:DLMF}. This completes the proof of \eqref{heortho}.

Finally, we prove the alternative representation \eqref{he4single}. The computations below are valid for $\mathrm{Re}(s)>\frac{1}{2}$, where the Gauss summation formula applies and the hypergeometric series converge absolutely at $1$; the final identities then extend to all $s \in \mathcal{S}$ by analytic continuation. Starting from \eqref{hesymp}, we use the formula for $\mathcal{M}^{(2)}(-s,\alpha)$ from Theorem~\ref{thm:bet2} together with the duplication formula and Gauss summation (see e.g.\ \cite[15.4.20]{NIST:DLMF}) to arrive at 
\[
2^{s-1}\mathcal{M}^{(2)}(-s,\alpha)
=
\frac{2^{s-1}\Gamma(\alpha+1-s)}{\Gamma(\alpha+3)\Gamma(s-1)}
\frac{\alpha+2}{s-1}
\,{}_2F_1(\alpha+1-s,2-s;\alpha+2;1).
\]
Substituting this into \eqref{hesymp} yields
\begin{multline}
\mathcal{M}^{(4)}(-s,\alpha) =
\frac{2^{s-1}\Gamma(\alpha+1-s)}{\Gamma(\alpha+3)\Gamma(s-1)}
\left[
\frac{\alpha+2}{s-1}{}_2F_1(\alpha+1-s,2-s;\alpha+2;1)
\right.\\
\left.-{}_3F_2\!\left(\frac{\alpha}{2}+1,\alpha+1-s,2-s;\frac{\alpha}{2}+2,\alpha+2;1\right)
\right].
\end{multline}
Expanding both series gives
\begin{equation}\label{e:0}
\mathcal{M}^{(4)}(-s,\alpha)
=
\frac{2^{s-1}\Gamma(\alpha+1-s)}{\Gamma(\alpha+3)\Gamma(s-1)}
\sum_{n=0}^{\infty}
\frac{(\alpha+1-s)_n(2-s)_n}{(\alpha+2)_n\,n!}
\left(
\frac{\alpha+2}{s-1}
-
\frac{\frac{\alpha}{2}+1}{\frac{\alpha}{2}+1+n}
\right).
\end{equation}
A direct calculation shows that
\begin{equation}\label{e:1}
\frac{\alpha+2}{s-1}
-
\frac{\frac{\alpha}{2}+1}{\frac{\alpha}{2}+1+n}
=
\frac{\alpha+3-s}{s-1}
\frac{\left(\frac{\alpha+5-s}{2}\right)_n\left(\frac{\alpha}{2}+1\right)_n}
{\left(\frac{\alpha+3-s}{2}\right)_n\left(\frac{\alpha}{2}+2\right)_n}.
\end{equation}
Substituting \eqref{e:1} back into \eqref{e:0} gives
the desired result.
\end{proof}

\subsection{Recurrence relations}
\label{se:rec}
In this section, we show that the limiting moments $\mathcal{M}^{(1)}(-k,\alpha)$ and $\mathcal{M}^{(4)}(-k,\alpha)$ satisfy a first order recurrence relation. By solving this recurrence relation, we deduce exact formulae for the moments as alternative expressions to those given in Corollary \ref{cor:int}.
\begin{proposition}
	For any integer $k \geq 1$, we have the following explicit formulae in the cases $\beta=1$ and $\beta=4$: 
	\begin{equation}
		\begin{split}
			&\mathcal{M}^{(1)}(-k,\alpha)  = 2^{k-1}\frac{\Gamma(\alpha+1-k)}{\Gamma(\alpha+1+k)}\\
			&\times \left(\alpha+1-3\sum_{m=1}^{k-1}4^{m-1}\frac{\Gamma(m-\frac{1}{2})\Gamma(2\alpha-m+2)\Gamma(\alpha+1+m)}{\sqrt{\pi}\Gamma(m+2)\Gamma(2\alpha+m+1)\Gamma(\alpha+1-m)}\right) \label{beta1solrec}
		\end{split}
	\end{equation}
	and
	\begin{equation}
		\begin{split}
			&\mathcal{M}^{(4)}(-k,\alpha)  = 2^{k-2}\frac{\Gamma(\frac{\alpha}{2}-k)}{\Gamma(\frac{\alpha}{2}+k)}\\
			&\times \left(\frac{\alpha}{2}-1-3\sum_{m=1}^{k-1}4^{m-1}\frac{\Gamma(m-\frac{1}{2})\Gamma(\frac{\alpha}{2}+m)\Gamma(\alpha-m)}{\sqrt{\pi}\Gamma(m+2)\Gamma(\frac{\alpha}{2}-m)\Gamma(\alpha-1+m)}\right). \label{beta4solrec}
		\end{split}
	\end{equation}
	The sums are interpreted as $0$ when $k=1$.
\end{proposition}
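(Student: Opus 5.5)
I would prove both formulas by deriving, for integer $k$, a first order recurrence in $k$ directly from Theorem \ref{thmbet1bet4}, and then summing it from $k=1$. For $\beta=1$, set $r_k=\Gamma(\alpha+1-k)/\Gamma(\alpha+1+k)$ and $a_k=\mathcal{M}^{(1)}(-k,\alpha)/(2^{k-1}r_k)$. Then \eqref{beta1solrec} is equivalent to two statements. The first is $a_1=\alpha+1$, i.e.\ $\mathcal{M}^{(1)}(-1,\alpha)=1/\alpha$. The second is
\[
a_k-a_{k-1}=-3\,\frac{4^{k-2}\Gamma(k-\frac32)\Gamma(2\alpha-k+3)\Gamma(\alpha+k)}{\sqrt{\pi}\,\Gamma(k+1)\Gamma(2\alpha+k)\Gamma(\alpha+2-k)} .
\]
Using Theorem \ref{thm:bet2}, this increment equals, up to rational factors in $k,\alpha$, $\mathcal{M}^{(2)}(-(k-1),2\alpha)/r_{k-1}$. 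The recurrence therefore says that the normalized $\beta=1$ moment changes by a multiple of the $\beta=2$ moment. The $\beta=4$ case is treated the same way, with $\alpha$ replaced by $\frac{\alpha}{2}$ in the prefactors and $r_k$ replaced by $\Gamma(\frac{\alpha}{2}-k)/\Gamma(\frac{\alpha}{2}+k)$.

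The base case $k=1$ is immediate from Corollary \ref{cor:int}, where the sums vanish, and the Gamma function identities give $1/\alpha$ for both $\beta=1$ and $\beta=4$.

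For the recurrence, I would work with \eqref{heortho} at $s=k$. The ${}_3F_2$ there has the pair $(\alpha;\alpha+1)$, so $(\alpha)_n/(\alpha+1)_n=\alpha/(\alpha+n)$. It is thus a Gauss-type ${}_2F_1(2\alpha+1-k,-k;2\alpha;1)$ with the extra weight $\alpha/(\alpha+n)$ on the $n$-th term. This is exactly the structure exploited in the proof of \eqref{he4single}. Dividing by $2^{k-1}r_k$ and subtracting the analogous expression at $k-1$, I would use the contiguity relation $(-k)_n=(-k+1)_n\,\frac{-k}{-k+n}$, together with the analogous relation for $(2\alpha+1-k)_n$. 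The aim is to rewrite the difference of the two terminating sums so that the factor $\frac{\alpha}{\alpha+n}$ cancels against a linear factor in $n$. This is done by partial fractions of the same kind as \eqref{e:1}, leaving a pure ${}_2F_1$ at argument $1$. The terminating Gauss (Chu--Vandermonde) sum then evaluates this ${}_2F_1$ in Gamma functions. The remaining contributions are the $\beta=2$ term $2^k\mathcal{M}^{(2)}(-k,2\alpha)$ from Theorem \ref{thm:bet2} and the explicit middle term $2^{k-1}r_k$. The latter contributes a constant after normalization, which is the $\alpha+1$. The former combines with the Gauss sum, via the duplication formula, to produce the stated increment $-3\cdot 4^{k-2}\Gamma(k-\frac32)\cdots$.

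Alternatively, I could derive the recurrence at the level of integrals, before any hypergeometric evaluation. I would integrate $u^{-2s}J_{2\alpha+1}(u)\int_0^u J_{2\alpha-1}(t)\,dt$ by parts. I would then use $J_{\nu-1}+J_{\nu+1}=\frac{2\nu}{u}J_\nu$ and $(u^{-\nu}J_\nu)'=-u^{-\nu}J_{\nu+1}$ to relate the $s$ and $s+1$ correction integrals, with a boundary or product term that reduces to $E_{a,\gamma}$ from \eqref{mellinbessels}. I would use this route as a cross-check.

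Finally, telescoping $a_k=a_1+\sum_{m=2}^{k}(a_m-a_{m-1})$ and reindexing $m\to m+1$ gives \eqref{beta1solrec}. The main obstacle is the bookkeeping in the contiguity step: choosing the right linear combination so that the weight $\alpha/(\alpha+n)$ telescopes exactly, and matching the resulting Gamma factors, including the constant $3$ and the powers of $4$, with the $\beta=2$ moment.
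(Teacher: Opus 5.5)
Your argument is correct, but it is not the paper's route. The paper never manipulates the ${}_3F_2$ in \eqref{heortho}. It imports the finite-$N$ three-term recurrence of \cite[Theorem 3.5]{CMSV16}, which links $\mathcal{M}^{(1)}_N(-k,\alpha)$ to $\mathcal{M}^{(2)}_{N-1}(-k,2\alpha+1)$. It keeps the order-$N^{k+1}$ terms to get the first-order recurrence \eqref{rec}, solves it from $a_1=1/\alpha$ using Theorem \ref{thm:bet2}, and obtains the $\beta=4$ formula from the $\beta=1$ one via the duality $\beta\to 4/\beta$, $\alpha\to-2\alpha/\beta$ and Gamma reflection.

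You instead derive the same recurrence from Theorem \ref{thmbet1bet4} itself, and the step you flag as the obstacle closes exactly. With $t_{k,n}=\frac{(2\alpha+1-k)_n(-k)_n}{(2\alpha)_n\,n!}$ and $C_k=\frac{\Gamma(2\alpha+1-k)}{\Gamma(2\alpha+1)\,k!\,r_k}$, one has
\[
C_kt_{k,n}-C_{k-1}t_{k-1,n}=C_kt_{k,n}\,\frac{(\alpha+n)(\alpha+n+1-2k)}{(\alpha+k)(\alpha+1-k)}.
\]
So the weight $\alpha/(\alpha+n)$ cancels, leaving $\sum_n(\alpha+1-2k+n)t_{k,n}$, which is two Chu--Vandermonde sums. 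Adding $2(H_k-H_{k-1})$, with $H_k=\mathcal{M}^{(2)}(-k,2\alpha)/r_k$, reproduces your stated increment exactly. For $\beta=4$ the corresponding numerator vanishes at $n=-\frac{\alpha}{2}-1$, so the weight $\frac{\alpha+2}{\alpha+2+2n}$ cancels in the same way and no duality is needed.

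One misattribution: after normalization the middle term $2^{k-1}r_k$ is the constant $1$, which drops out of $a_k-a_{k-1}$. The value $a_1=\alpha+1$ actually comes from the $\beta=2$ term at $k=1$, and the $1$ cancels the ${}_3F_2$ term there. This is harmless, since you take the base case from Corollary \ref{cor:int}.

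The trade-off is as follows. The paper's proof is short given two external inputs, the finite-$N$ recurrence and the $\beta\leftrightarrow 4/\beta$ duality, and it presents the recurrence as the large-$N$ shadow of a finite-$N$ identity. Yours is self-contained given Theorems \ref{thm:bet2} and \ref{thmbet1bet4}. It treats $\beta=1$ and $\beta=4$ uniformly and doubles as a consistency check on the Mellin-transform formulas, at the cost of the hypergeometric bookkeeping.
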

\begin{proof}
	We begin with $\beta=1$. For ease of notation, set $m_{k} =  \mathcal{M}^{(1)}_{N}(-k,\alpha)$. By \cite[Theorem 3.5]{CMSV16}, we have the recurrence relation
	\begin{equation}
		\begin{split}
			&m_{k-1}- 2(\alpha+N)m_{k}+(\alpha^{2}+\alpha-k(k+1))m_{k+1}\\
			&= -\frac{3(2^{k+1})}{4(k+1)}\left((2\alpha+2N+k)v_{k}-v_{k-1}\right),
		\end{split}
	\end{equation}
	where $v_{k} = \mathcal{M}^{(2)}_{N-1}(-k,2\alpha+1)$. Since $\mathcal{M}^{(\beta)}_{N}(-k,\alpha)$ is of order $N^{k}$ as $N \to \infty$, we can extract the leading order terms above and obtain the limiting recurrence
	\begin{equation}
		\begin{split}
			&-2\mathcal{M}^{(1)}(-k,\alpha)+(\alpha^{2}+\alpha-k(k+1))\mathcal{M}^{(1)}(-(k+1),\alpha)\\
			&=-\frac{6(2^{k+1})}{4(k+1)}\mathcal{M}^{(2)}(-k,2\alpha+1).
		\end{split}
	\end{equation}
	This is in the form of a first order recurrence relation with an inhomogeneous right-hand term. We set $a_{k} = \mathcal{M}^{(1)}(-k,\alpha)$ and write the recurrence as
	\begin{equation}
		a_{k+1}=a_{k}f_{k}+g_{k} \label{rec}
	\end{equation}
	where $f_{k} = \frac{2}{\alpha^{2}+\alpha-k(k+1)}$ and 
	\begin{equation}
		g_{k} = -\frac{6(2^{k+1})}{4(k+1)(\alpha^{2}+\alpha-k(k+1))}\mathcal{M}^{(2)}(-k,2\alpha+1).
	\end{equation}
	The solution of this recurrence with initial condition $a_{1}=\frac{1}{\alpha}$ is
	\begin{equation}
		\begin{split}
			a_{k} &= \left(\prod_{i=1}^{k-1}f_{i}\right)\left(\frac{a_{1}}{f_{0}}+\sum_{m=1}^{k-1}\frac{g_{m}}{\prod_{i=0}^{m}f_{i}}\right)\\
			&= 2^{k-1}\prod_{i=0}^{k-1}\frac{1}{(\alpha^{2}+\alpha-i(i+1))}\\
			&\times\left(\alpha+1-\sum_{m=0}^{k-1}\frac{3\mathcal{M}^{(2)}(-m,2\alpha+1)}{(m+1)(\alpha^{2}+\alpha-m(m+1))}\prod_{i=0}^{m}(\alpha^{2}+\alpha-i(i+1))\right).
		\end{split}
	\end{equation}
	To simplify this, note that
	\begin{equation}
		\prod_{i=0}^{k-1}(\alpha^{2}+\alpha-i(i+1)) = \prod_{i=0}^{k-1}(\alpha-i)(\alpha+i+1) = \frac{\Gamma(\alpha+1+k)}{\Gamma(\alpha+1-k)}.
	\end{equation}
	So
	\begin{equation}
		\begin{split}
			&\mathcal{M}^{(1)}(-k,\alpha) = 2^{k-1}\frac{\Gamma(\alpha+1-k)}{\Gamma(\alpha+1+k)}\\
			&\times \left(\alpha+1-\sum_{m=0}^{k-1}\frac{3\mathcal{M}^{(2)}(-m,\alpha)}{(m+1)(\alpha-m)(\alpha+m+1)}\frac{\Gamma(\alpha+2+m)}{\Gamma(\alpha-m)}\right). \label{solak}
		\end{split}
	\end{equation}
	Recalling the exact result of Theorem \ref{thm:bet2} and inserting this into \eqref{solak} completes the proof of \eqref{beta1solrec}. Then \eqref{beta4solrec} follows from the duality $\beta \to \frac{4}{\beta}$ and $\alpha \to -\frac{2\alpha}{\beta}$ and applying the reflection identity for the Gamma functions.
\end{proof}

\subsection{Tail estimates}
\label{sec:tail}
The goal of this section is to prove Lemma \ref{lem:cutoff}. We start with the required asymptotics of Laguerre polynomials.
\begin{lemma}
	\label{lem:lag}
	Let $M>0$ be a constant, chosen sufficiently large and $\delta>0$ be chosen sufficiently small. On the interval $\frac{M}{N} \leq x \leq \delta N$, we have for any fixed $\alpha>-2$,
	\begin{equation}
		x^{\frac{\alpha}{2}}e^{-\frac{x}{2}}L_{N}^{(\alpha)}(x) = O(N^{\frac{\alpha}{2}}(Nx)^{-\frac{1}{4}}), \qquad N \to \infty, \label{lag-main-bnd}
	\end{equation}
	where the big-O term is uniform in the sense that the implied constant is independent of $M$, $N$ and $x$. Furthermore, uniformly on $0 \leq x \leq M$, we have
	\begin{equation}
		\lim_{N \to \infty}N^{-\alpha}x^{\frac{\alpha}{2}}L^{(\alpha)}_{N}\left(\frac{x}{N}\right) = J_{\alpha}(2\sqrt{x}). \label{hilb}
	\end{equation}
\end{lemma}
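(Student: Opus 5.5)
The plan is to derive both parts of Lemma \ref{lem:lag} from the classical Hilb-type expansion of Laguerre polynomials in terms of Bessel functions. Set $\nu = N + \frac{\alpha+1}{2}$. For fixed $\alpha > -1$ (and, by a routine extension, $\alpha > -2$), Szeg\H{o}'s Theorem 8.22.4 gives
\begin{equation*}
e^{-\frac{x}{2}} x^{\frac{\alpha}{2}} L_N^{(\alpha)}(x) = \frac{\Gamma(N+\alpha+1)}{N!}\,\nu^{-\frac{\alpha}{2}} J_\alpha\bigl(2\sqrt{\nu x}\bigr) + E_N(x),
\end{equation*}
where the error satisfies
\begin{equation*}
E_N(x) = \begin{cases} x^{\frac{\alpha}{2}+2}\, O(N^{\alpha}), & 0 < x \le c/N,\\ x^{\frac{5}{4}}\, O(N^{\frac{\alpha}{2}-\frac{3}{4}}), & c/N \le x \le \omega. \end{cases}
\end{equation*}

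The limit \eqref{hilb} should then follow almost immediately. Replace $x$ by $x/N$ with $0 \le x \le M$, and multiply by $N^{-\alpha}$ after reinstating the factor $e^{-x/2N} \to 1$. Using $\Gamma(N+\alpha+1)/N! \sim N^{\alpha}$ and $\nu/N \to 1$, the main term becomes
\begin{equation*}
N^{-\alpha}\, N^{\alpha}\, N^{-\frac{\alpha}{2}}\, N^{\frac{\alpha}{2}}\, x^{-\frac{\alpha}{2}}\, x^{\frac{\alpha}{2}}\, J_\alpha(2\sqrt{x}).
\end{equation*}
Here I still need to track the powers in the normalisation of \eqref{hilb}; there the $x^{\alpha/2}$ pairs with $(x/N)^{-\alpha/2}$. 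The error term is $O(N^{-2})$ uniformly on $[0,M]$. Uniform continuity of $J_\alpha(2\sqrt{\cdot})$ then gives uniform convergence. An alternative route is to expand $L_N^{(\alpha)}(x/N) = \sum_j \binom{N+\alpha}{N-j} (-x/N)^j / j!$ and compare term by term with the ${}_0F_1$ series \eqref{besselhyper}, using dominated convergence since $\binom{N+\alpha}{N-j} N^{-j-\alpha} \le C^j / j!$.

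For \eqref{lag-main-bnd} on $M/N \le x \le \delta N$, I will split the range at a fixed small $\omega$. On $M/N \le x \le \omega$, the Bessel bound $|J_\alpha(y)| \le C y^{-1/2}$ for $y \ge 1$ gives a main term of size
\begin{equation*}
N^{\alpha}\, N^{-\frac{\alpha}{2}}\, (Nx)^{-\frac{1}{4}} = N^{\frac{\alpha}{2}} (Nx)^{-\frac{1}{4}}.
\end{equation*}
The error term $x^{5/4} N^{\alpha/2 - 3/4}$ is bounded by $N^{\alpha/2}(Nx)^{-1/4}$ because $x^{3/2} N^{-1/2} \le \omega^{3/2}$. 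Note that $M$ appears only through $2\sqrt{\nu x} \ge 2\sqrt{M} \ge 1$, so the constant does not depend on $M$.

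On the remaining range $\omega \le x \le \delta N$, I will use the Plancherel--Rotach oscillatory asymptotics (Szeg\H{o} Theorem 8.22.8 with $x = 4N\cos^2\phi$ and $\phi$ bounded away from $0$ and $\pi/2$ when $\delta$ is small), or equivalently the uniform Bessel-type expansion of Erd\'elyi/Frenzen--Wong, which covers $0 < x \le (4-\epsilon)N$ in one stroke. In the oscillatory region this gives
\begin{equation*}
e^{-\frac{x}{2}} x^{\frac{\alpha}{2}} L_N^{(\alpha)}(x) = O\bigl(N^{\frac{\alpha}{2}-\frac{1}{4}}\, x^{-\frac{1}{4}}\bigr),
\end{equation*}
with the implied constant uniform in that region, which is exactly \eqref{lag-main-bnd}.

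The main obstacle is making the bound genuinely uniform across the transition between the Hilb region and the Plancherel--Rotach region, and handling $-2 < \alpha \le -1$. I would address both by citing the Frenzen--Wong uniform expansion, which is valid for all real $\alpha$ and on all of $(0,(4-\epsilon)N]$, combined with the standard envelope bound on $J_\alpha$. If that citation is unavailable, I would extend to $-2 < \alpha \le -1$ via the relation $L_N^{(\alpha)} = L_N^{(\alpha+1)} - L_{N-1}^{(\alpha+1)}$, which would, however, need care to avoid losing a power of $N$.
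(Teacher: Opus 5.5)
Your argument is correct for $\alpha>-1$, but it is organised differently from the paper's.

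\textbf{The two routes.} The paper uses a single uniform Bessel-type expansion (DLMF 18.15.19, of Frenzen--Wong type). It is written in the variable $\nu x$ with $\nu=4N+2\alpha+2$ and $\xi=\frac12(\arcsin\sqrt{x}+\sqrt{x(1-x)})$, with error controlled by $\mathrm{env}J_\alpha$, and it covers the whole range $M/N\le x\le\delta N$ at once. The bound then follows from $\xi\asymp\sqrt{x}$ and the large-argument decay of $J_\alpha$. The paper simply asserts that \eqref{hilb} follows from the same expansion.

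You instead split at a fixed $\omega$ and use two textbook results: Szeg\H{o}'s Hilb-type formula on $[M/N,\omega]$ and Plancherel--Rotach on $[\omega,\delta N]$. Your bookkeeping is right in both regions. The main term is $N^{\alpha/2}(Nx)^{-1/4}$ with error-to-main ratio $x^{3/2}N^{-1/2}\le\omega^{3/2}$, and Plancherel--Rotach with $\sin\phi$ close to $1$ gives $O(N^{\alpha/2-1/4}x^{-1/4})$. The ``transition'' you flag as the main obstacle is not one. Both theorems hold on ranges fixed by arbitrary constants $\omega,\epsilon$, so taking $4\epsilon^2<\omega$ makes them overlap and you just take the larger constant.

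\textbf{What each buys.} The paper's route gives one formula with no patching. Yours uses only classical statements. Also, since Szeg\H{o}'s Plancherel--Rotach theorem (8.22.8) is stated for arbitrary real $\alpha$, your route confines the extension to $\alpha\le-1$ to the region $[M/N,\omega]$. You also actually prove \eqref{hilb} via the Mehler--Heine series argument, which the paper omits.

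\textbf{The range $-2<\alpha\le-1$.} Do not rely on Frenzen--Wong being valid for all real $\alpha$. The paper quotes that expansion only for $\alpha>-1$, and itself falls back on $L_N^{(\alpha)}=L_N^{(\alpha+1)}-L_{N-1}^{(\alpha+1)}$, merely asserting that the leading terms cancel. In your setting this is easy to complete.

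Put $\gamma=\alpha+1\in(-1,0]$ and write $x^{\alpha/2}=x^{-1/2}x^{\gamma/2}$. The Hilb error terms for $L_N^{(\gamma)}$ and $L_{N-1}^{(\gamma)}$ need no cancellation: after the factor $x^{-1/2}$ each is $O(x^{3/4}N^{\alpha/2-1/4})$, which is at most $\omega$ times the target.

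For the main terms, set $a_n=\Gamma(n+\gamma+1)/n!$ and $F(\nu)=\nu^{-\gamma/2}J_\gamma(2\sqrt{\nu x})$, and write
\[
a_NF(\nu_N)-a_{N-1}F(\nu_{N-1})=(a_N-a_{N-1})F(\nu_N)+a_{N-1}\bigl(F(\nu_N)-F(\nu_{N-1})\bigr).
\]
Here $a_N-a_{N-1}=\gamma\,\Gamma(N+\gamma)/N!=O(N^{\alpha})$, $\nu_N-\nu_{N-1}=1$, and $F'(\nu)=-\nu^{-(\gamma+1)/2}\sqrt{x}\,J_{\gamma+1}(2\sqrt{\nu x})$. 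After multiplying by $x^{-1/2}$, the first piece contributes $O\bigl(N^{\alpha/2}(Nx)^{-1/4}(Nx)^{-1/2}\bigr)$. The second contributes $O\bigl(N^{\alpha/2}(Nx)^{-1/4}\bigr)$ by the mean value theorem. This is exactly \eqref{lag-main-bnd}, with constants independent of $M$.

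\textbf{Minor caveat on \eqref{hilb}.} For non-integer $\alpha<0$, $J_\alpha(2\sqrt{x})$ is unbounded at $0$, so your uniform-continuity step fails there. In fact, for large fixed $N$ the difference in \eqref{hilb} is unbounded as $x\to0$, so the stated uniformity on all of $[0,M]$ cannot hold in that normalisation. What is uniform on $[0,M]$ for every real $\alpha$ is the Mehler--Heine form $N^{-\alpha}L_N^{(\alpha)}(x/N)\to x^{-\alpha/2}J_\alpha(2\sqrt{x})$, which your series argument proves directly. This form is all the paper needs, since it only applies \eqref{hilb} with a positive Laguerre parameter.
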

\begin{proof}
	By \cite[18.15.19]{NIST:DLMF} Laguerre polynomials with fixed $\alpha>-1$ satisfy the following asymptotic expansion as $N \to \infty$,
	\begin{equation}
		(\nu x)^{\frac{\alpha}{2}}e^{-\frac{1}{2}\nu x}L_N^{(\alpha)}(\nu x) = \frac{\nu^{\frac{\alpha}{2}}}{2^\alpha x^{\frac{1}{4}}(1-x)^{\frac{1}{4}}} \left[ \xi^{1/2} J_\alpha(\nu \xi) + \epsilon_1 \right], \label{lagasy}
	\end{equation}
	where $\nu = 4N + 2\alpha + 2$, $\xi = \frac{1}{2} \left( \arcsin\sqrt{x} + \sqrt{x(1-x)} \right)$ and the error term $\epsilon_1 = O(\xi^{1/2} \mathrm{env}J_\alpha(\nu \xi)/\nu)$ uniformly on $0\leq x < 1-\delta$, where
	\begin{equation} \mathrm{env}J_\alpha(x) = \begin{cases} J_{\alpha}(x), & x < j_{\alpha,1} \\ \sqrt{J_{\alpha}(x)^{2}+Y_{\alpha}(x)^{2}}, & x \geq j_{\alpha,1} \end{cases} \end{equation}
	and $Y_{\alpha}(x)$ is the corresponding Hankel function, $j_{\alpha,1}$ is the first zero of $J_{\alpha}(x)$.
	Since $0 \leq x \leq \delta$, we can choose $\delta$ sufficiently small such that $\frac{1}{2}\sqrt{x} \leq \xi \leq 2\sqrt{x}$, which implies $\nu\xi \geq 2\sqrt{M}$. Choosing $M$ sufficiently large and using the standard large argument asymptotics of Bessel functions (see \cite[10.17.3]{NIST:DLMF}), we have $|J_{\alpha}(\nu \xi)| = O((\nu \xi)^{-\frac{1}{2}})$. Hence the main term in \eqref{lagasy} including pre-factors is $O(\nu^{\frac{\alpha}{2}}(\nu \sqrt{x})^{-\frac{1}{2}})$. Similarly, since $\nu \xi$ is large, $\mathrm{env}J_\alpha(\nu \xi) = O((\nu \xi)^{-\frac{1}{2}})$. We conclude that the error term in \eqref{lagasy} including pre-factors is $O(\nu^{\frac{\alpha}{2}-1}(\nu \sqrt{x})^{-\frac{1}{2}})$. Replacing $x$ with $\frac{x}{\nu}$ in these bounds completes the proof. Finally, we address the case $L^{(\alpha)}_{N}(x)$ with $\alpha \in (-2,-1]$. Using the identity (e.g.\ \cite[18.9.13]{NIST:DLMF}) $L^{(\alpha)}_{N}(x) = L^{(\alpha+1)}_{N}-L^{(\alpha+1)}_{N-1}(x)$, we apply expansion \eqref{lagasy} with $x$ replaced with $x/\nu$ and note that the the leading terms cancel yielding the same expansion \eqref{lag-main-bnd}. The asymptotics \eqref{hilb} are well known and follow from the approximation \eqref{lagasy}, we omit the details.
\end{proof}

\begin{proof}[Proof of Lemma \ref{lem:cutoff}]
	We first consider the contribution of the integral for $x > \delta N$ for some small $\delta>0$. Changing variable $x=uN$ we get,
	\begin{equation}
		\int_{\delta N}^{\infty}x^{-s}\rho^{(\beta)}_{N}(x)dx = N^{1-s}\int_{\delta}^{\infty}u^{-s}\rho^{(\beta)}_{N}(uN)du = O(N^{1-s}),
	\end{equation}
	where the above estimate follows from the weak convergence \eqref{mplaw}. Here, and below, the big-O term with complex exponent is interpreted by taking the absolute value, which is equivalent to taking the real part of the exponent. Hence, the above is negligible since $N^{1-s}$ is of lower order than $N^{s}$ for $\mathrm{Re}(s)>\frac{1}{2}$.

	Now we estimate the range $M/N < x < \delta N$. We do this case by case, starting with $\beta=2$. Recall the representation of $\rho^{(2)}_{N}(x)$ in \eqref{rho2laguerre}. Using the identity (see e.g.\ \cite[18.9.23]{NIST:DLMF}) $L^{(\alpha)}_{N}(x)' = -L_{N-1}^{(\alpha+1)}(x)$, we have
	\begin{equation}
		\rho^{(2)}_{N}(x) = x^{\alpha}e^{-x}c_{N,\alpha}\left(L^{(\alpha+1)}_{N-1}(x)L^{(\alpha)}_{N-1}(x)-L^{(\alpha)}_{N}(x)L_{N-2}^{(\alpha+1)}(x)\right). \label{bad-dens}
	\end{equation}
	We note that this identity is not yet adapted to finding asymptotics as $N \to \infty$, since the leading terms obtained from Lemma \ref{lem:lag} cancel out when substituted into \eqref{bad-dens}. To address this, we use the identity (see e.g.\ \cite[18.9.13]{NIST:DLMF}) $L^{(\alpha)}_{N-1}(x) = L_{N}^{(\alpha)}(x)-L_{N}^{(\alpha-1)}(x)$ in the first term and $L^{(\alpha+1)}_{N-2}(x) = L_{N-1}^{(\alpha+1)}(x)-L_{N-1}^{(\alpha)}(x)$ in the second term. This yields a cancellation of terms and we arrive at
    \begin{equation}
		\rho^{(2)}_{N}(x) = x^{\alpha}e^{-x}c_{N,\alpha}\left(L^{(\alpha)}_{N}(x)L_{N-1}^{(\alpha)}(x) - L^{(\alpha+1)}_{N-1}(x)L^{(\alpha-1)}_{N}(x)\right). \label{luebeta2dens}
    \end{equation}
    We now apply Lemma \ref{lem:lag} to each Laguerre polynomial in \eqref{luebeta2dens}. Combining this with $c_{N,\alpha} \sim N^{1-\alpha}$ as $N \to \infty$, we obtain
	\begin{equation}
		\rho^{(2)}_{N}(x) = O((N/x)^{\frac{1}{2}}), \qquad N \to \infty
	\end{equation}
	uniformly on $M/N < x < \delta N$. Then
	\begin{equation}
		\int_{M/N}^{\delta N}x^{-s}\rho^{(2)}_{N}(x)dx = O(N^{s}M^{\frac{1}{2}-s}).
	\end{equation}
	
	For $\beta=4$, we proceed similarly. Using Lemma \ref{lem:lag} on the correction term in \eqref{bet4exact}, we have
	\begin{equation}
		\frac{(2N)!}{\Gamma(2N+\alpha)}x^{\frac{\alpha}{2}-1}e^{-\frac{x}{2}}L^{(\alpha-1)}_{2N}(x) = O(N^{-\frac{\alpha-1}{2}}(Nx)^{-\frac{1}{4}}x^{-\frac{1}{2}}). \label{bet4prefactors}
	\end{equation}
	We split the integral over $t$ in \eqref{bet4exact} on $t \in [0,\frac{M}{N}]$ and $t \in [\frac{M}{N},x]$. In the first integral $I_{1}$ we change variables $t \to t/N$ and apply \eqref{hilb}, which gives $I_{1} = O(N^{\frac{\alpha}{2}})$. On the second integral $I_{2}$ we apply \eqref{lag-main-bnd} which shows $I_{2} = O(N^{\frac{\alpha}{2}+\frac{1}{4}}x^{\frac{1}{4}})$ which has greater magnitude than $I_{1}$. Multiplying the bound on $I_{2}$ with the bound in \eqref{bet4prefactors}, we get $\rho^{(4)}_{N}(x) = O((N/x)^{\frac{1}{2}}).$ This is the same bound we obtained in the $\beta=2$ case and the same reasoning establishes \eqref{orderbound} for $\beta=4$.
	
	For $\beta=1$, we begin by assuming $\alpha>0$. Then the integral over $t$ in \eqref{bet1exact} is  
	\begin{equation}
		2\int_{0}^{x}t^{\alpha-1}e^{-t}L^{(2\alpha-1)}_{N}(2t)dt - b_{N,\alpha}, \quad b_{N,\alpha} := \int_{0}^{\infty}t^{\alpha-1}e^{-t}L^{(2\alpha-1)}_{N}(2t)dt. \label{bet12integrals}
	\end{equation}
	It follows from the exact formula in \cite[Eq. (7.131)]{Forr} that $b_{N,\alpha} = O(N^{\alpha-1})$ as $N \to \infty$. Then we can use Lemma \ref{lem:lag} to estimate the first integral in \eqref{bet12integrals} as $O(x^{\frac{1}{4}}N^{\alpha-\frac{3}{4}})$ and on the interval in question, this term is of greater order than $b_{N,\alpha}$. Again by Lemma \ref{lem:lag}, the pre-factors in \eqref{bet1exact} are $O(N^{\frac{3}{2}-\alpha}x^{-\frac{1}{2}}(Nx)^{-\frac{1}{4}})$. Taking the product of these two error terms and proceeding as with $\beta=4$, we conclude that $\rho^{(1)}_{N}(x) = O((N/x)^{\frac{1}{2}})$ and therefore we have \eqref{orderbound} for $\beta=1$. 
    
    Finally, we deal with the  case $\beta=1$ with $-\frac{1}{2} < \alpha \leq 0$. First we note that the formula in \cite[Eq. (7.131)]{Forr} remains analytic in this range and continues to satisfy $b_{N,\alpha} = O(N^{\alpha-1})$, so we will focus on the first integral in \eqref{bet12integrals}. When $\alpha=0$ we use the identity (see e.g.\ \cite[18.36.2]{NIST:DLMF}) $L^{(-1)}_{N}(x) = \frac{-x}{N}L^{(1)}_{N}(x)$ and follow the same steps as for $\alpha>0$, so let us assume $-\frac{1}{2} < \alpha < 0$. Such values of $\alpha$ pose two difficulties. First, the Laguerre polynomials we need have parameters outside the classical range and secondly \eqref{bet12integrals} becomes non-integrable. As mentioned earlier and as per the discussion in \cite{W99}, we resolve this by analytic continuation as follows. We divide the first integral in \eqref{bet12integrals} over the two intervals $[0,\epsilon_{N}]$ and $[\epsilon_{N},x]$, where we choose $\epsilon_{N} = e^{-N}$. Integrating by parts, we write the integral over $[0,\epsilon_{N}]$ as
    \begin{equation}
2\int_{0}^{\epsilon_{N}}t^{\alpha}e^{-t}L_{N}^{(2\alpha)}(2t)dt-\int_{0}^{\epsilon_{N}}t^{\alpha}e^{-t}L_{N}^{(2\alpha-1)}(2t)dt,
\end{equation}
which is now integrable for all $\alpha>-1$ and represents the desired analytic continuation. By Lemma \ref{lem:lag}, these terms are at most $O(\epsilon_{N}^{3/4}N^{\alpha-1/4})$. Multiplying by the pre-factors in \eqref{bet1exact} we get bounds of order $O(\epsilon_{N}^{3/4}Nx^{-3/4})$. The contribution of this term to the integration over $M/N < x < \delta N$ is exponentially small in $N$. On the remaining integral over $[\epsilon_{N},x]$, we apply Lemma \ref{lem:lag} and follow the same steps as for $\alpha>0$.
\end{proof}

\section{Proofs for general {\texorpdfstring{$\beta > 0$}{beta}}}
\label{sec:fdresults}In this section, we prove Corollary \ref{thm:momsbeta} and Theorem \ref{thm:lowtemp}. We begin by recalling the exact formula for \eqref{momsintro} obtained in \cite[Eq. 33]{FD16}. It is stated there for the Jacobi $\beta$-ensemble which includes an extra parameter $b$. The Laguerre ensemble \eqref{laguerrebeta} is obtained by taking a suitable limit as $b \to \infty$ and this results in the following.
\begin{theorem}[Fyodorov and Le Doussal \cite{FD16}]\label{thmfd}
We have the exact formula
\begin{equation}\label{eq:FLD}
\mathcal{M}_{N}^{(\beta)}(-k,\alpha) = \sum_{\eta, |\eta|=k}A^{(\beta,N)}_{\eta}\alpha^{-}_{\eta}
\end{equation}
where, setting $\kappa = \frac{\beta}{2}$, we have
\begin{equation}\label{e:A}
\begin{split}
A^{(\beta,N)}_{\eta} &:= \frac{k(\eta_{1}-1)!} {(\kappa(\ell(\eta)-1)+1)_{\eta_1}} \prod_{i=2}^{\ell(\eta)} \frac{(\kappa(1-i))_{\eta_{i}}} {(\kappa(\ell(\eta)-i)+1)_{\eta_{i}}}
\prod_{i=1}^{\ell(\eta)}\frac{(\kappa(N-i+1))_{\eta_i}}{(\kappa(\ell(\eta)-i+1))_{\eta_i}}\\
&\times \prod_{1 \leq i < j \leq \ell(\eta)}\left(\frac{\kappa(j-i)+\eta_{i}-\eta_{j}}{\kappa(j-i)}\right)\frac{(\kappa(j-i+1))_{\eta_{i}-\eta_{j}}}{(\kappa(j-i-1)+1)_{\eta_{i}-\eta_{j}}}
\end{split}
\end{equation}
and $\alpha^{-}_{\eta}$ is given by \eqref{e:a2}.
\end{theorem}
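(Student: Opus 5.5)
The plan is to derive the formula as a limit of the Jacobi-ensemble result \cite[Eq. 33]{FD16}. Take the Jacobi $\beta$-ensemble on $[0,1]^N$ with joint density proportional to
\[
\prod_{j=1}^{N}x_j^{\alpha}(1-x_j)^{b}\prod_{i<j}|x_j-x_i|^{\beta}.
\]
For fixed $N$, $\alpha$, $\beta$, \cite{FD16} gives $\mathbb{E}_{\mathrm{Jac}}\big(\sum_j x_j^{-k}\big)$ as a sum over partitions $\eta$ with $|\eta|=k$. I will write each summand there as the product of three pieces: the $b$-independent combinatorial coefficient (the same product of Pochhammer ratios as in \eqref{e:A}, including $\prod_i(\kappa(N-i+1))_{\eta_i}$); the factor $\alpha^{-}_{\eta}$ from \eqref{e:a2}; and a factor carrying all the $b$-dependence. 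I expect the last to be a product of rising factorials of the form $\prod_{i=1}^{\ell(\eta)}(\alpha+b+c_i)_{\eta_i}$, where each $c_i$ depends only on $\kappa$, $N$ and $i$. The first step is to write this decomposition down carefully from \cite{FD16}, translating their parametrisation of the exponents (and any shift of $\alpha$, $b$ by $\pm1$) into ours.

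The second step is the scaling limit. Substitute $x_j=\lambda_j/b$. The Jacobi density becomes proportional to
\[
\prod_j\lambda_j^{\alpha}(1-\lambda_j/b)^{b}\,\mathbb{1}_{\{\lambda_j\le b\}}\prod_{i<j}|\lambda_j-\lambda_i|^{\beta},
\]
and the integrand $(1-\lambda/b)^b\mathbb{1}_{\{\lambda\le b\}}$ increases monotonically to $e^{-\lambda}$ as $b\to\infty$. Moreover
\[
\mathbb{E}_{\mathrm{Jac}}\Big(\sum_j x_j^{-k}\Big)=b^{k}\,\frac{Z_b\big[\sum_j\lambda_j^{-k}\big]}{Z_b[1]},
\]
where $Z_b[f]$ is the integral of $f$ against the unnormalised rescaled density. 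Both numerator and denominator have nonnegative integrands that increase in $b$. By monotone convergence each tends to the corresponding Laguerre integral. These limits are finite: for the numerator this uses $k<\alpha+1$, exactly as in the definition of \eqref{momsintro}, since integrability at the origin is the only issue. Hence
\[
\mathcal{M}^{(\beta)}_N(-k,\alpha)=\lim_{b\to\infty}b^{-k}\,\mathbb{E}_{\mathrm{Jac}}\Big(\sum_j x_j^{-k}\Big).
\]
For $k\ge\alpha+1$, both sides of \eqref{eq:FLD} can be read as rational functions of $\alpha$, with identity by analytic continuation; I would restrict attention to the convergent range.

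The third step is to apply the limit termwise on the finite partition sum. Since $|\eta|=k$, the $b$-dependent factor satisfies
\[
\prod_i(\alpha+b+c_i)_{\eta_i}=b^{k}\big(1+O(1/b)\big).
\]
Its leading coefficient is $1$, independent of $\eta$. Thus $b^{-k}$ times each summand converges to $A^{(\beta,N)}_{\eta}\alpha^{-}_{\eta}$, which gives \eqref{eq:FLD}.

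The main obstacle is the bookkeeping in the first step. I must confirm that in \cite[Eq. 33]{FD16} all the $b$-dependence really sits in a factor that is polynomial of total degree exactly $k$ in $b$ with leading coefficient $1$. A $b$-dependent denominator, such as a normalisation involving $(\alpha+b+\ldots)_{\eta_i}$ in the denominator, would change the power of $b$. Equally, the remaining $\alpha$-dependent factors must reduce precisely to $\alpha^{-}_{\eta}$ with the shifts $\alpha+1+\kappa(i-1)$. If the parametrisation in \cite{FD16} differs by shifts, I would check the final formula against the small cases $k=1,2$, using the exact value $\mathcal{M}^{(\beta)}_N(-1,\alpha)=N/\alpha$ (computable via the Selberg integral), to fix conventions.
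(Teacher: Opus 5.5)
Your proposal is correct and follows the same route as the paper, which likewise obtains \eqref{eq:FLD} from the Jacobi-ensemble formula \cite[Eq. 33]{FD16} by letting $b\to\infty$. Your monotone-convergence justification of $\mathcal{M}^{(\beta)}_N(-k,\alpha)=\lim_{b\to\infty}b^{-k}\,\mathbb{E}_{\mathrm{Jac}}\bigl(\sum_j x_j^{-k}\bigr)$, together with the termwise limit over the finite partition sum, supplies details the paper leaves implicit; your bookkeeping worry also resolves favourably, since expanding $p_k$ in Jack polynomials and combining Kadell's integral with $P_\eta(x_1^{-1},\ldots,x_N^{-1})=(x_1\cdots x_N)^{-\eta_1}P_{\hat\eta}(x)$, where $\hat\eta_i=\eta_1-\eta_{N+1-i}$, gives a $b$-dependent factor in each summand of exactly $\prod_{i}(\alpha+b+2-\eta_i+\kappa(N+i-2))_{\eta_i}$ in the numerator (so $c_i$ also depends on $\eta_i$, which is harmless), of degree $k$ in $b$ with leading coefficient $1$.
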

With the help of computer algebra, it is easy to tabulate the first few moments explicitly using the above result.
\begin{align}
\mathcal{M}^{(\beta)}_{N}(-1,\alpha) &= \frac{N}{\alpha}\\
\mathcal{M}^{(\beta)}_{N}(-2,\alpha) &= \frac{N(\beta N+2\alpha)}{\alpha(\alpha-1)(\beta+2\alpha)}\\
\mathcal{M}^{(\beta)}_{N}(-3,\alpha) &= \frac{N(\beta N+\alpha)(\beta N+2\alpha)}{\alpha(\alpha-1)(\alpha-2)(\beta+\alpha)(\beta+2\alpha)}\\
\mathcal{M}^{(\beta)}_{N}(-4,\alpha) &= \frac{p_{4}}{\alpha(\alpha-1)(\alpha-2)(\alpha-3)(2\alpha-2+\beta)(\beta+\alpha)(\beta+2\alpha)(3\beta+2\alpha)}
\end{align}
 where
 \begin{equation} 	
 \begin{split}
 p_{4} &= N(\beta N+2\alpha)\left(N^{2}(5\alpha\beta^{2}+3\beta^{3}-6\beta^{2})\right.\\
 &\left.+N(10\alpha^{2}\beta+6\alpha\beta^{2}-12\alpha\beta)+4\alpha^{3}+2\alpha^{2}\beta-4\alpha^{2}+2\alpha\beta\right).
 \end{split}
 \end{equation}
\begin{proof}[Proof of Corollary \ref{thm:momsbeta}]
In \eqref{eq:FLD} for fixed $k$, the sum is over the finite set of partitions of $k$, so it is enough to pass to the limit term-by-term. For a fixed partition $\eta$, the only $N$-dependent factor is
\[
\prod_{i=1}^{\ell(\eta)}(\kappa(N-i+1))_{\eta_i}.
\]
For each $i=1,\ldots,\ell(\eta)$,
\[
\frac{(\kappa(N-i+1))_{\eta_i}}{N^{\eta_i}}
=
\kappa^{\eta_i}
\prod_{m=0}^{\eta_i-1}\left(1+\frac{m-\kappa(i-1)}{\kappa N}\right)
\longrightarrow
\kappa^{\eta_i},
\qquad N \to \infty.
\]
Since $|\eta|=\sum_{i=1}^{\ell(\eta)}\eta_i = k$, multiplying over $i$ gives
\[
N^{-k}\prod_{i=1}^{\ell(\eta)}(\kappa(N-i+1))_{\eta_i}
\longrightarrow
\kappa^{k},
\qquad N \to \infty.
\]
All remaining factors in \eqref{e:A} are independent of $N$, hence
\[
\lim_{N \to \infty} N^{-k}A^{(\beta,N)}_{\eta}
=
A^{(\beta)}_{\eta},
\]
where $A^{(\beta)}_{\eta}$ is given by \eqref{AlargeN}. Passing to the limit term-by-term, we obtain
 \eqref{betform}.
\end{proof}

\begin{corollary}
\label{cor:largebeta}
The following large $\beta$ limit of the moments defined by
\begin{equation}
\mathcal{M}(-k,\nu) := \lim_{\beta \to \infty}\beta^{k}\mathcal{M}^{(\beta)}\left(-k,\frac{\beta}{2}(\nu+1)\right) 
\end{equation}
exists for any $\nu>-1$. We have the explicit formula
\begin{equation}
\mathcal{M}(-k,\nu) = \sum_{\eta, |\eta|=k}A_{\eta}\left(\prod_{i=1}^{\ell(\eta)}(i+\nu)^{-\eta_{i}}\right), \label{firstNthenbet}
\end{equation}
where
\begin{equation}\label{e:A2}
\begin{split}
A_{\eta} &:= \frac{2^k k(\eta_{1}-1)!} {\eta_{\ell(\eta)}!}(-1)^{k-\eta_{1}} \prod_{i=2}^{\ell(\eta)}(i-1)^{\eta_{i}}\prod_{i=1}^{\ell(\eta)-1}(\ell(\eta)-i)^{-\eta_{i}}\\
&\times \prod_{i=1}^{\ell(\eta)}(\ell(\eta)-i+1)^{-\eta_{i}}\prod_{i=2}^{\ell(\eta)}\frac{1}{(\eta_{i-1}-\eta_{i})!}\\
&\times \prod_{1\leq p < q \leq \ell(\eta)}(q-p+1)^{\eta_{p}-\eta_{q}}\prod_{q=3}^{\ell(\eta)}\prod_{p=1}^{q-2}(q-p-1)^{\eta_{q}-\eta_{p}}.
\end{split}
\end{equation}
Furthermore, this result is independent of the order of the two limits in $N$ and $\beta$. Namely, if we instead consider the limit with \textit{fixed $N$}, $a = \frac{\beta}{2}(\nu+1)$ and $\beta \to \infty$, followed by $N \to \infty$, we obtain the same result. That is, we have
\begin{equation}
\lim_{N \to \infty}\lim_{\beta \to \infty}\left(\frac{\beta}{N}\right)^{k}\mathcal{M}^{(\beta)}_{N}\left(-k,\frac{\beta}{2}(\nu+1)\right) = \mathcal{M}(-k,\nu). \label{firstbetthenN}
\end{equation}
If $\beta := \beta_{N}$ is any sequence such that $\beta_{N} \to \infty$ as $N \to \infty$, we again have
\begin{equation}
\lim_{N \to \infty}\left(\frac{\beta_{N}}{N}\right)^{k}\mathcal{M}^{(\beta_{N})}_{N}\left(-k,\frac{\beta_{N}}{2}(\nu+1)\right) = \mathcal{M}(-k,\nu). \label{oneNlimit}
\end{equation}
\end{corollary}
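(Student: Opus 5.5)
The plan is to work term by term in the finite sums over partitions: Corollary \ref{thm:momsbeta} for the first limit, and Theorem \ref{thmfd} for the statements with $N$ finite. Since $k$ is fixed, there are finitely many partitions $\eta$ of $k$, so every limit can be taken inside the sum. It therefore suffices to determine, for each fixed $\eta$, the leading power of $\kappa=\beta/2$ in $A^{(\beta)}_\eta\,\alpha^-_\eta$ (respectively in $A^{(\beta,N)}_\eta\alpha^-_\eta$) when $\alpha=\kappa(\nu+1)$, and to check that after multiplication by $\beta^k$ the result is finite and equals the corresponding summand in \eqref{firstNthenbet}.

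\textbf{Step 1 (first limit).} With $\alpha=\kappa(\nu+1)$ we have $\alpha+1+\kappa(i-1)=\kappa(\nu+i)+1$. Hence
\[(\alpha+1+\kappa(i-1))_{-\eta_i}=\prod_{m=1}^{\eta_i}\bigl(\kappa(\nu+i)+1-m\bigr)^{-1}\sim \kappa^{-\eta_i}(\nu+i)^{-\eta_i},\]
which uses $\nu+i>0$ because $\nu>-1$. So $\alpha^-_\eta\sim\kappa^{-k}\prod_i(\nu+i)^{-\eta_i}$.

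For $A^{(\beta)}_\eta$, each Pochhammer $(c\kappa+d)_n$ with $c\neq0$ behaves like $(c\kappa)^n$. The exceptions are those with $c=0$:
\begin{itemize}
\item $(1)_{\eta_i}=\eta_i!$, arising from $i=\ell$ in the second factor of \eqref{AlargeN};
\item $(1)_{\eta_{i}-\eta_{j}}$, arising when $j=i+1$ in the double product.
\end{itemize}
These exceptions produce the factorials $\eta_{\ell}!$ and $(\eta_{i-1}-\eta_i)!$ in \eqref{e:A2}. The ratio $(\kappa(j-i)+\eta_i-\eta_j)/(\kappa(j-i))$ tends to $1$.

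Counting powers of $\kappa$:
\begin{itemize}
\item the prefactor contributes $\kappa^{k}$;
\item the denominator $(\kappa(\ell-1)+1)_{\eta_1}$ contributes $\kappa^{-\eta_1}$ (it is $\eta_1!$ if $\ell=1$);
\item the first product contributes $\kappa^{\sum_{i\ge2}\eta_i}\cdot\kappa^{-\sum_{2\le i<\ell}\eta_i}$;
\item the product $\prod_i(\kappa(\ell-i+1))_{\eta_i}^{-1}$ contributes $\kappa^{-k}$;
\item the double product contributes $\kappa^{\sum_{i<j}(\eta_i-\eta_j)}\kappa^{-\sum_{j\ge i+2}(\eta_i-\eta_j)}=\kappa^{\sum_{i}(\eta_i-\eta_{i+1})}=\kappa^{\eta_1-\eta_\ell}$.
\end{itemize}
The net exponent is $k-\eta_1+\eta_\ell-k+\eta_1-\eta_\ell=0$, with the $\ell=1$ case checked separately. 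Therefore $A^{(\beta)}_\eta$ converges to a finite constant, and $\beta^k\alpha^-_\eta\to 2^k\prod(\nu+i)^{-\eta_i}$.

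Collecting the leading constants reproduces \eqref{e:A2} factor by factor:
\begin{itemize}
\item the sign $(-1)^{k-\eta_1}$ comes from $(\kappa(1-i))_{\eta_i}\sim(-(i-1)\kappa)^{\eta_i}$;
\item the factors $(\ell-i)^{-\eta_i}$ come from $(\kappa(\ell-i)+1)_{\eta_i}$, including $i=1$;
\item the factors $(q-p+1)^{\eta_p-\eta_q}$ and $(q-p-1)^{\eta_q-\eta_p}$ come from the double product.
\end{itemize}
This bookkeeping is the main tedious step. The only real danger is that some $c=0$ factor was missed, which would change the exponent count, so I would verify it against the explicit moments \eqref{e:limitMoments-intro}.

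\textbf{Step 2 (reversed order and joint limit).} For finite $N$ the only change is the factor $\prod_i(\kappa(N-i+1))_{\eta_i}$ in \eqref{e:A}. For fixed $N\ge\ell(\eta)$ it behaves like $\kappa^k\prod_i(N-i+1)^{\eta_i}$ as $\kappa\to\infty$, while every other factor has the same large-$\kappa$ behaviour as in Step 1. Hence $\lim_{\beta\to\infty}(\beta/N)^k$ of the summand equals $A_\eta\prod_i(\nu+i)^{-\eta_i}\prod_i\bigl((N-i+1)/N\bigr)^{\eta_i}$. Letting $N\to\infty$ gives \eqref{firstbetthenN}. Partitions with $\ell(\eta)>N$ cause no trouble: for them the factor with $i=N+1$ is $(0)_{\eta_i}=0$, so their summands vanish identically, while $N\to\infty$ eventually includes all partitions of $k$.

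For \eqref{oneNlimit} I would write, for $\ell(\eta)\le N$,
\[N^{-k}\kappa^{-k}\prod_i(\kappa(N-i+1))_{\eta_i}=\prod_i\prod_{m<\eta_i}\Bigl(1-\tfrac{i-1}{N}+\tfrac{m}{\kappa N}\Bigr).\]
The product on the right tends to $1$ whenever $N\to\infty$, uniformly in $\kappa\ge1$. Every other factor in the summand is a function of $\kappa$ alone with a finite limit as $\kappa\to\infty$, by Step 1. Thus each summand is a product of factors each having a limit along any sequence with $\beta_N,N\to\infty$ jointly. This yields \eqref{oneNlimit}, and \eqref{firstbetthenN} is a special case of the same computation.
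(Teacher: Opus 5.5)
Your proposal is correct and follows essentially the same route as the paper. Like the paper, you take a term-by-term limit over the finitely many partitions of $k$, using factor-by-factor large-$\kappa$ asymptotics of \eqref{AlargeN} and $\alpha^-_\eta$, where the $c=0$ Pochhammer symbols produce $\eta_\ell!$ and $(\eta_{i-1}-\eta_i)!$. You then isolate the single $N$-dependent factor $\prod_i(\kappa(N-i+1))_{\eta_i}$ from Theorem \ref{thmfd}. Your additions are the explicit power count, the observation that partitions with $\ell(\eta)>N$ contribute zero, and the factorization into a $\kappa$-only part times a factor tending to $1$ uniformly in $\kappa\ge 1$; these make the joint limit \eqref{oneNlimit} somewhat more careful than the paper's ``$+o(1)$'' step, but the argument is the same.
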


\begin{proof}
Recall that $\kappa=\beta/2$. Since the set of partitions of $k$ is finite, all limits may be taken term by term. Fix a partition $\eta$ of $k$ and set $\ell=\ell(\eta)$. We shall use the elementary large-$\kappa$ asymptotics
\[
(\kappa c+d)_{m} = (\kappa c)^{m}\left(1+O(\kappa^{-1})\right), \qquad
(\kappa c+d)_{-m} = (\kappa c)^{-m}\left(1+O(\kappa^{-1})\right),
\]
valid for fixed $m\in \mathbb{N}$, fixed $d$ and $c>0$.

To prove \eqref{firstNthenbet}, we note that the factor \eqref{e:a2} satisfies
\[
\alpha^{-}_{\eta}
=
\prod_{i=1}^{\ell}(\alpha+1+\kappa(i-1))_{-\eta_{i}}
=
\kappa^{-k}\left(\prod_{i=1}^{\ell}(i+\nu)^{-\eta_{i}}\right)\left(1+O(\kappa^{-1})\right).
\]
Now consider the coefficient \eqref{AlargeN}. If $\ell=1$, then $\eta=(k)$ and a direct computation from \eqref{betform} gives
\[
\beta^{k}A^{(\beta)}_{(k)}\alpha^{-}_{(k)} \to 2^{k}(1+\nu)^{-k},
\]
which agrees with \eqref{e:A2}. Otherwise if $\ell\ge 2$, the individual factors in \eqref{AlargeN} have the following asymptotics
\begin{align*}
\frac{1}{(\kappa(\ell-1)+1)_{\eta_{1}}}
&=
\kappa^{-\eta_{1}}(\ell-1)^{-\eta_{1}}\left(1+O(\kappa^{-1})\right), \\
\frac{(\kappa(1-i))_{\eta_{i}}}{(\kappa(\ell-i)+1)_{\eta_{i}}}
&=
(-1)^{\eta_{i}}(i-1)^{\eta_{i}}(\ell-i)^{-\eta_{i}}\left(1+O(\kappa^{-1})\right),
\qquad 2\le i\le \ell-1, \\
\frac{(\kappa(1-\ell))_{\eta_{\ell}}}{(\kappa(\ell-\ell)+1)_{\eta_{\ell}}}
&=
(-1)^{\eta_{\ell}}\frac{\kappa^{\eta_{\ell}}(\ell-1)^{\eta_{\ell}}}{\eta_{\ell}!}
\left(1+O(\kappa^{-1})\right), \\
\frac{1}{(\kappa(\ell-i+1))_{\eta_{i}}}
&=
\kappa^{-\eta_{i}}(\ell-i+1)^{-\eta_{i}}\left(1+O(\kappa^{-1})\right),
\qquad 1\le i\le \ell .
\end{align*}
For the pairwise factors, writing $d_{pq}:=\eta_{p}-\eta_{q}$, we have
\[
\frac{\kappa(q-p)+d_{pq}}{\kappa(q-p)} = 1+O(\kappa^{-1}),
\]
and
\[
\frac{(\kappa(q-p+1))_{d_{pq}}}{(\kappa(q-p-1)+1)_{d_{pq}}}
=
\begin{cases}
\displaystyle
\kappa^{d_{pq}}\frac{(q-p+1)^{d_{pq}}}{d_{pq}!}\left(1+O(\kappa^{-1})\right),
& q=p+1,\\[1ex]
\displaystyle
\left(\frac{q-p+1}{q-p-1}\right)^{d_{pq}}\left(1+O(\kappa^{-1})\right),
& q\ge p+2.
\end{cases}
\]
Multiplying these asymptotics and using $\beta^{k}=(2\kappa)^{k}$, all powers of $\kappa$ cancel exactly. The $\eta$-th summand in \eqref{betform} therefore converges to
\[
A_{\eta}\prod_{i=1}^{\ell}(i+\nu)^{-\eta_{i}},
\]
with $A_{\eta}$ given by \eqref{e:A2}. Summing over the finitely many partitions of $k$ yields \eqref{firstNthenbet}.

To prove \eqref{firstbetthenN}, start from the finite-$N$ formula in Theorem \ref{thmfd}. The only new factor relative to the previous computation is
\[
\frac{(\kappa(N-i+1))_{\eta_{i}}}{(\kappa(\ell-i+1))_{\eta_{i}}}
=
N^{\eta_{i}}
\left(1-\frac{i-1}{N}\right)^{\eta_{i}}
(\ell-i+1)^{-\eta_{i}}
\left(1+O(\kappa^{-1})\right).
\]
All other factors are exactly the same as above. Hence, for each fixed $\eta$,
\[
\lim_{\beta\to\infty}
\left(\frac{\beta}{N}\right)^{k}A^{(\beta,N)}_{\eta}\alpha^{-}_{\eta}
=
A_{\eta}
\left(\prod_{i=1}^{\ell(\eta)}\left(1-\frac{i-1}{N}\right)^{\eta_{i}}\right)
\left(\prod_{i=1}^{\ell(\eta)}(i+\nu)^{-\eta_{i}}\right).
\]
Summing over $\eta$ gives
\[
\lim_{\beta\to\infty}\left(\frac{\beta}{N}\right)^{k}\mathcal{M}_{N}^{(\beta)}
\left(-k,\frac{\beta}{2}(\nu+1)\right)
=
\sum_{\eta, |\eta|=k}
A_{\eta}
\left(\prod_{i=1}^{\ell(\eta)}\left(1-\frac{i-1}{N}\right)^{\eta_{i}}\right)
\left(\prod_{i=1}^{\ell(\eta)}(i+\nu)^{-\eta_{i}}\right).
\]
Since the set of partitions of $k$ is finite, letting $N\to\infty$ and using $\left(1-\frac{i-1}{N}\right)^{\eta_i}\to 1$ for each fixed partition $\eta$, we obtain \eqref{firstbetthenN}.

Finally, if $\beta=\beta_{N}$ with $\beta_{N}\to\infty$, the same asymptotics with $\kappa=\beta_{N}/2$ give, for each fixed partition $\eta$,
\[
\left(\frac{\beta_{N}}{N}\right)^{k}A^{(\beta_{N},N)}_{\eta}a^{-}_{\eta}
=
A_{\eta}
\left(\prod_{i=1}^{\ell(\eta)}\left(1-\frac{i-1}{N}\right)^{\eta_{i}}\right)
\left(\prod_{i=1}^{\ell(\eta)}(i+\nu)^{-\eta_{i}}\right)
+o(1),
\qquad N\to\infty.
\]
Summing over the finitely many partitions of $k$ and letting $N\to\infty$ proves \eqref{oneNlimit}.
\end{proof}

We now show that the limiting expression $\mathcal{M}(-k,\nu)$ in \eqref{firstNthenbet} coincides with the even integer values of the Bessel zeta function:
\begin{equation}
\mathcal{M}(-k,\nu) = 8^{k}\zeta_{\nu}(2k), \qquad k \in \mathbb{N}.
\end{equation}
\begin{proof}[Proof of Theorem \ref{thm:lowtemp}]
Since Corollary \ref{cor:largebeta} shows that the relevant limiting operations can be interchanged, it suffices to compute just one of the limits. In particular, in the following we will compute the limit \eqref{firstbetthenN} using an alternative approach based on the ideas in \cite{DE05}. To introduce this, recall that a random variable $\chi_{c}$ is said to have the $\chi$ distribution with parameter $c>0$ if it has probability density function
\begin{equation}
f_{\chi_c}(x) = \frac{x^{c-1}e^{-x^{2}/2}}{2^{c/2-1}\Gamma\left(\frac{c}{2}\right)}, \qquad x > 0.
\end{equation}
Then \cite{DE05} shows that the points $\{\lambda_{j}\}_{j=1}^{N}$ of the Laguerre $\beta$-ensemble are distributed according to the eigenvalues of the matrix $L_{\beta} := B_{\beta}B_{\beta}^{\mathrm{T}}$ where $B_{\beta}$ is the following $N \times N$ lower bi-diagonal matrix:
\begin{equation}
B_{\beta} = \frac{1}{\sqrt{2}}\begin{pmatrix}
\chi_{2\tilde{a}}      & 0        & 0        & \cdots & 0        \\
\chi_{\beta(N-1)}   & \chi_{2\tilde{a}-\beta}      & 0        & \cdots & 0        \\
0        & \chi_{\beta(N-2)}   & \chi_{2\tilde{a}-2\beta}      & \cdots & 0        \\
\vdots   & \vdots   & \vdots   & \ddots & \vdots   \\
0        & 0        & 0        & \chi_{\beta} & \chi_{2\tilde{a}-\beta(N-1)}
\end{pmatrix},
\end{equation}
where all the $\chi$ random variables are independent, and
\begin{equation}
\tilde{a} := \alpha+\frac{\beta}{2}(N-1)+1.
\end{equation}
In the scaling regime of interest $\alpha = \frac{\beta}{2}(\nu+1)$ and hence
\begin{equation}
\tilde{a} = \frac{\beta}{2}(N+\nu)+1.
\end{equation}
Then in the limit $\beta \to \infty$, all $\chi$ variables have large parameter. It follows from the strong law of large numbers that we have the almost sure convergence $\lim_{\beta \to \infty}\sqrt{2/\beta}B_{\beta} = B$ where
\begin{equation}
B =  \begin{pmatrix}
\sqrt{N+\nu}     & 0        & 0        & \cdots & 0        \\
\sqrt{N-1}   & \sqrt{N+\nu-1}      & 0        & \cdots & 0        \\
0        & \sqrt{N-2}   & \sqrt{N+\nu-2}      & \cdots & 0        \\
\vdots   & \vdots   & \vdots   & \ddots & \vdots   \\
0        & 0        & 0        & \sqrt{1} & \sqrt{\nu+1}
\end{pmatrix}.
\end{equation}
The eigenvalues of the limiting matrix $L = BB^{\mathrm{T}}$ are precisely the zeros of the Laguerre polynomial $L_{N}^{(\nu)}(x)$ of parameter $\nu$ and degree $N$, see \cite{DE05} for further details. We denote these zeros $\{l^{(N)}_{n}(\nu)\}_{n=1}^{N}$ and arrange them in increasing order $l^{(N)}_{1}(\nu) < l^{(N)}_{2}(\nu) < \ldots < l^{(N)}_{N}(\nu)$. As this limit is for each fixed size matrix $N$, it also holds in the sense of joint expectations of finite collections of the independent $\chi$ random variables, i.e.\ for any multivariate Laurent polynomial of fixed number of variables $N$ and fixed degree, we have
\begin{equation}
\lim_{\beta \to \infty}\mathbb{E}(f(\tilde{\chi}_{a_{1},\beta},\ldots,\tilde{\chi}_{a_N, \beta})) = f(\sqrt{a_1},\ldots,\sqrt{a_N}),
\end{equation}
where $\tilde{\chi}_{c,\beta} = \frac{\chi_{c\beta}}{\sqrt{\beta}}$ and $a_{1},\ldots,a_{N}$ are positive parameters. We conclude that
\begin{equation}
\lim_{\beta \to \infty}\beta^{k}\mathbb{E}\left(\sum_{n=1}^{N}\lambda_{n}^{-k}\right) = 2^{k}\sum_{n=1}^{N}\left(l^{(N)}_{n}(\nu)\right)^{-k}. \label{sumoflaguerre}
\end{equation}
This deals with the first limit as $\beta \to \infty$ in \eqref{firstbetthenN}. To compute the subsequent limit $N \to \infty$, note that for a fixed $n$ in the summand on the right-hand side of \eqref{sumoflaguerre}, we have the pointwise limit
\begin{equation}
\lim_{N \to \infty}4Nl^{(N)}_{n}(\nu) = j^{2}_{\nu,n} ,
\end{equation}
where $\{j_{\nu,n}\}_{n=1}^{\infty}$ are the zeros of the Bessel function $J_{\nu}(x)$, arranged in increasing order, see  \cite[Section 18.16]{NIST:DLMF}. To compute the limit of the sum in \eqref{sumoflaguerre} we apply dominated convergence. We have the following uniform lower bound \cite[18.16.10]{NIST:DLMF}
\begin{equation}
l^{(N)}_{n}(\nu) > \frac{j^{2}_{\nu,n}}{4N+2\nu+2}, \qquad n=1,\ldots,N.
\end{equation} 
This implies
\begin{equation}
\sum_{n=1}^{N}\left(l^{(N)}_{n}(\nu)\right)^{-k} \leq C_{k}(\nu)N^{k}\sum_{n=1}^{\infty}j^{-2k}_{\nu,n}
\end{equation}
where $C_{k}(\nu)$ is a positive constant depending only on $k$ and $\nu$. We conclude that
\begin{equation}
\lim_{N \to \infty}\lim_{\beta \to \infty}\left(\frac{\beta}{N}\right)^{k}\mathbb{E}\left(\sum_{n=1}^{N}\lambda_{n}^{-k}\right) = 8^{k}\sum_{n=1}^{\infty}j^{-2k}_{\nu,n} = 8^{k}\zeta_{\nu}(2k).
\end{equation}
\end{proof}

\bibliographystyle{plain}
\bibliography{hardedgebib}

\end{document}